\newcommand{\bi}{\begin{array}[t]{@{}l@{}}}
\newcommand{\ei}{\end{array}}
\newcommand{\ba}{\begin{array}}
\newcommand{\ea}{\end{array}}
\newcommand{\bda}{\[\ba}
\newcommand{\eda}{\ea\]}
\newcommand{\bp}{\begin{quote}\tt\begin{tabbing}}
\newcommand{\ep}{\end{tabbing}\end{quote}}
\newcommand{\myirule}[2]{{\renewcommand{\arraystretch}{1.2}\ba{c} #1
                      \\ \hline #2 \ea}}
\newcommand{\rlabel}[1]{\mbox{(#1)}}
\newcommand{\conc}{\cdot}
\newcommand{\deriv}[2]{d_{#2}(#1)} 
\newcommand{\nullable}[1]{{\cal N}(#1)}
\newcommand{\isEmpty}[1]{{\Phi}(#1)}
\newcommand{\concPart}[1]{{\cal C}(#1)}
\newcommand{\seqPart}[1]{\mathcal{S}(#1)}
\newcommand{\false}{\mathit{False}}
\newcommand{\true}{\mathit{True}}
\newcommand{\forkEff}[1]{\mathit{Fork(#1)}}
\newcommand{\leftQ}[2]{#1 \backslash #2}
\newcommand\semle\leqq 
\newcommand\semeq\equiv              
\newcommand\syneq=         
\newcommand\DirectDescendant\sqsubset
\newcommand\Descendant\preceq
\newcommand\TrueDescendant\prec
\newcommand\Card{\sharp}
\newcommand\Power{\wp}
\newcommand\Sem[1]{\llbracket#1\rrbracket}
\newcommand\BeforeOrEqual\le
\title{Forkable Regular Expressions}
\author{Martin Sulzmann\inst1  \and Peter Thiemann\inst2}
\institute{
  Faculty of Computer Science and Business Information Systems \\
  Karlsruhe University of Applied Sciences \\
  Moltkestrasse 30, 76133 Karlsruhe, Germany\\
  \email{martin.sulzmann@hs-karlsruhe.de}
  \and 
  Faculty of Engineering, University of Freiburg\\ Georges-K{\"o}hler-Allee
  079, 79110 Freiburg, Germany \\
  \email{thiemann@acm.org}
}
\begin{document}

\maketitle

\begin{abstract}
  We consider  \emph{forkable regular expressions}, which enrich regular expressions with a fork
  operator, to establish a formal basis for static and dynamic analysis of the communication behavior
  of concurrent programs. 
  We define a novel compositional semantics for forkable expressions, establish their fundamental
  properties, and define derivatives for them as a basis for the generation of automata, for
  matching, and for language containment tests.

  Forkable expressions may give rise to non-regular languages, in general, but we identify
  sufficient conditions on expressions that guarantee finiteness of the automata construction via
  derivatives.  
  \keywords{automata and logic, forkable expressions, derivatives}
\end{abstract}

\section{Introduction}
\label{sec:introduction}

Languages like Concurrent ML and Go come with built-in support for fine-grained concurrency, dynamic
thread creation, and channel-based
communication. Analyzing the communication behavior of programs in these languages may be done by
an \emph{effect system}. Such a system computes an abstraction of the sequences of events (i.e., the
communication traces with events like communication actions or synchronizations) that a program may exhibit. 

Effect systems for concurrent programs have been explored by the
Nielsons~\cite{DBLP:conf/popl/NielsonN94}, who proposed to model event traces with ``behaviors'' which
are regular expressions extended with a fork operator that encapsulates the behavior of a newly
created thread. While their work enables the analysis of finiteness properties of the 
communication topology, 
it stops short of providing a semantics of behaviors in terms of effect traces. Subsequent work by the same
authors \cite{DBLP:conf/fase/NielsonAN98,Amt+Nie+Nie:JFP-1997,DBLP:journals/sttt/AmtoftNN98}
concentrates on subtyping and automatic inference of effects. 

We take up the Nielsons' notion of behavior and tackle the problem of defining a compositional semantics
for behaviors in terms of effect traces. Our novel definition yields a semantic basis for the
static and dynamic analysis of concurrent languages with dynamic thread creation and other
communication effects.  We show that, in general, a behavior may give rise to a \emph{non-regular}
trace language. This observation is in line with previous work on concurrent regular
expressions~\cite{Garg:1992:CRE:132251.132252}, flow expressions~\cite{DBLP:journals/tse/Shaw78},
and shuffle expressions~\cite{Gischer:1981:SLP:358746.358767} all of 
which augment regular expressions with (at least) shuffle and
shuffle closure operators. The shuffle closure is also
referred to as the iterated shuffle~\cite{DBLP:journals/tcs/Jantzen85}.

We explore two application areas for forkable expressions. In a run-time verification setting (aka
dynamic analysis), we are interested in matching traces against behaviors,
either at run time or post-mortem. To this end, we extend Brzozowski derivatives to forkable
expressions. Although Brzozowski's construction no longer gives rise to a finite  automaton, in
general, derivatives can still be used to solve instances of the word problem (which is hence decidable).

For static analysis, we are interested in approximation and testing language containment in a
specification. For this use case, we give a decidable criterium that guarantees finiteness of (our
extension of) Brzozowski's automaton construction. This criterium essentially requires a finite
communication topology, that is, it forbids that new communicating threads are created in loops. We conjecture
that this property can be established with the Nielsons' original analysis \cite{DBLP:conf/popl/NielsonN94}.

In summary our contributions are:
\begin{itemize}
\item In Section~\ref{sec:behaviors}, we define a novel trace semantics of behaviors (i.e., regular
  expressions with fork), and establish their fundamental properties.
 \item Section~\ref{sec:derivatives} extends Brzozowski's derivative operation to behaviors.
 \item In Section~\ref{sec:well-behaved-behaviors}, we characterize a
   class of behaviors with regular trace languages. For these behaviors, Brzozowski's construction
   yields finite automata.
\end{itemize}

Related work is discussed in Section~\ref{sec:related-work}.

The online version of this paper contains an appendix with all proofs. \footnote{\url{http://arxiv.org/abs/1510.07293}}

\section{Preliminaries}
\label{sec:preliminaries}

For a set $X$, we write $\Card X$ for the cardinality of $X$ and $\Power (X)$ for its powerset.
If $F : \Power (X) \to \Power (X) $ is a monotone function (i.e., $X\subseteq Y$ implies $F(X)
\subseteq F(Y)$), then we write $\mu F$ for the least fixpoint of this function, which is uniquely
defined due to Tarski's theorem. We write $\mu X.e$ for the fixpoint $\mu (\lambda X.e)$ where $e$ is
a set-valued expression composed of monotone functions assuming that $X$ is a set with the same type
of elements.
(The scope of the $\mu$ operator extends as far to the right as possible.)
We will employ this operator to define
the meaning of forkable Kleene star behaviors~\cite{DBLP:conf/csl/Leiss91}.

Let $\Sigma$ be a finite set, the \emph{alphabet of primitive events}.
We write $\Sigma^*$ for the set of finite words over $\Sigma$ and denote with $v \conc w$ the
concatenation of words $v,w \in \Sigma^*$. 
For languages $L, M \subseteq \Power (\Sigma^*)$, we write $L \conc M = \{ v\conc w \mid v \in L, w
\in M \}$ for the set of all
pairwise concatenations. We write $v\conc M$ as a shorthand for $\{v\} \conc M$. 
The (asynchronous) shuffle operation $v \| w \subseteq \Sigma^*$ on words is the set of all
interleavings of words $v$ and $w$. It is
defined inductively by 
\begin{align*}
  \varepsilon \| w &= \{ w \}  &
  v \| \varepsilon &= \{ v \} &
  xv \| yw & = \{x\} \conc (v \| yw) \cup \{y\} \conc (xv \| w)
\end{align*}
The shuffle operation is lifted to languages by $L \| M = \bigcup \{ v \| w \mid v \in L, w \in M \}$.

We write $\leftQ {L_1} L_2$ to denote the left quotient
of $L_2$ with $L_1$ where 
$\leftQ {L_1} L_2 = \{ w \mid \exists v\in L_1. v \conc w \in L_2  \}$. 
We write $\leftQ x L$
as a shorthand for $\leftQ { \{ x \} } L$.

\section{Behaviors}
\label{sec:behaviors}

\begin{figure}[tp]
  \begin{displaymath}
    \begin{array}[t]{rl}
      L(r) &= L (r, \{  \varepsilon \}) \\
    \end{array}
    \qquad
    \begin{array}[t]{rl}
    L (\phi, K) &= \emptyset \\
    L (\varepsilon, K) &= K \\
    L (x, K) &= \{x\} \conc K
    \end{array}
    \qquad
    \begin{array}[t]{rl}
    L (r+s, K) &= L (r, K) \cup L (s, K) \\
    L (r \conc s, K) &= L (r, L (s, K)) \\
    L (r^*, K) &= \mu X. L (r, X) \cup K \\
    L (\forkEff r, K) &= L (r) \| K
    \end{array}
  \end{displaymath}
  \vspace{-\baselineskip}
  \caption{Trace language of a behavior}
  \label{fig:trace-language}
\end{figure}

Recall that $\Sigma$ is the alphabet of primitive events. Intuitively, a primitive event $x\in\Sigma$ is a globally
visible side effect like sending or receiving a message. A \emph{behavior} is a regular expression
over $\Sigma$ extended with a new fork operator. 
\begin{align*}
  r,s,t &::= \phi \mid \varepsilon \mid  x \mid r+s \mid r \conc s \mid r^* \mid \forkEff{r} \mid ( r )
\end{align*}
As usual, we assume that $\conc$ binds tighter than $+$.

The semantics of a behavior $r$ is going to be a \emph{trace language} $L (r) \subseteq
\Sigma^*$. However, due to the presence of the fork operator, its definition is not a simple
extension of the standard semantics $\Sem{\cdot}$ of a regular expression.
\begin{definition}
  Figure~\ref{fig:trace-language} defines, for a behavior $r$,  
  the trace languages $L (r) \subseteq \Sigma^*$ and 
  $L (r, K)\subseteq \Sigma^*$ with respect to a continuation
  language $K \subseteq  \Sigma^*$.
\end{definition}
By induction on $r$, we can show that the mapping $K \mapsto L (r, K)$
in $\Power (\Sigma^*) \to \Power (\Sigma^*)$ is monotone, so that $L$
is well-defined.
For fork-free behaviors that do not make use of the $\forkEff r$
operator, the trace language is regular and coincides with the standard semantics $\Sem{r}$ of
a regular expression.
\begin{theorem}\label{theorem:fork-free-regular}
  If $r$ is fork-free, then $L(r)$ is regular and $L(r) = \Sem{r} $.
\end{theorem}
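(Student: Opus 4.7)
The plan is to strengthen the statement to a lemma about the auxiliary, continuation-indexed language $L(r, K)$: namely, for every fork-free behavior $r$ and every $K \subseteq \Sigma^*$,
\[
L(r, K) \;=\; \Sem{r} \conc K .
\]
The theorem then follows by taking $K = \{\varepsilon\}$, since $L(r) = L(r, \{\varepsilon\}) = \Sem{r} \conc \{\varepsilon\} = \Sem{r}$, and the latter is regular by the standard semantics of regular expressions.

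The lemma is proved by structural induction on $r$. The cases $\phi$, $\varepsilon$, and $x$ are immediate from the defining equations in Figure~\ref{fig:trace-language}. Since $r$ is fork-free, the $\forkEff r$ case does not arise. The sum case uses distributivity of concatenation over union; the concatenation case uses associativity together with two applications of the induction hypothesis: $L(r \conc s, K) = L(r, L(s, K)) = \Sem{r} \conc L(s, K) = \Sem{r} \conc \Sem{s} \conc K$.

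The only nontrivial case is Kleene star, where I need to show $\mu X.\, L(r, X) \cup K = \Sem{r^*} \conc K$. Using the induction hypothesis (which is universally quantified in $K$, hence applies to the bound variable $X$), this reduces to proving
\[
\mu X.\, \Sem{r} \conc X \cup K \;=\; \Sem{r}^* \conc K .
\]
The inclusion $\supseteq$ is verified by checking that $\Sem{r}^* \conc K$ is a fixpoint of $F(X) = \Sem{r} \conc X \cup K$, which follows from the identity $\Sem{r}^* = \Sem{r} \conc \Sem{r}^* \cup \{\varepsilon\}$. The inclusion $\subseteq$ is established by showing $\Sem{r}^n \conc K \subseteq Y$ for every fixpoint $Y$ of $F$, by induction on $n$, and then taking the union over $n$.

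The main obstacle is this fixpoint equality in the star case; the rest is routine equational reasoning. Well-definedness of the $\mu$ is already guaranteed by the monotonicity remark following the definition, so no additional care is needed there.
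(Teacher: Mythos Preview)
Your approach is correct and matches the paper's: both generalize to $L(r,K) = \Sem{r}\conc K$ for arbitrary $K$ and proceed by induction on $r$, with the paper invoking Arden's lemma for the star case where you unfold essentially the same fixpoint argument by hand. One small slip: your labels for the two inclusions in the star case are swapped---showing that $\Sem{r}^*\conc K$ is a fixpoint of $F$ yields $\mu F \subseteq \Sem{r}^*\conc K$ (by leastness of $\mu F$), while the induction on $n$ establishes $\Sem{r}^*\conc K \subseteq \mu F$; the arguments themselves are sound.
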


It is known that the regular languages are closed under the shuffle operation~\cite{Ginsburg:1966:MTC:1102023}. However, for forkable
expressions the semantics of $\forkEff r$ is defined by \emph{shuffling with the continuation language} so
that the language defined by a behavior need not be regular as the following example shows.
\begin{example}\label{example:behaviors-1}
  Consider the behavior $\forkEff{s}^*$ for a plain regular expression $s$
  where $s$ only consists of the standard regular expression operators.
  Its semantics is the shuffle closure of $\Sem{s}$ as demonstrated by the following calculation
  \begin{align}
    \label{eq:2}
    L (\forkEff{s}^*, \{\varepsilon\}) &
                                         = \mu X. L (\forkEff{s}, X) \cup \{\varepsilon\}
                                         = \mu X. L (s) \| X \cup \{\varepsilon\}\\
    \nonumber{}
                                       &= \{\varepsilon\} \cup L (s) \cup L (s) \| L (s) \cup \dots
  \end{align}
  where we assume that $\|$ binds tighter than $\cup$.
  
  In general, the shuffle closure is not regular~\cite{DBLP:journals/tcs/Jantzen85} as the following concrete instance shows.
  Consider the behavior $r = \forkEff{x\conc y+y\conc x}^*$. By the calculation in \eqref{eq:2}, $L (r)$ is the shuffle
  closure of $\{ x\conc y, y\conc x \}$ which happens to be the context-free language $\{ w \in \{x,y\}^*
  \mid \Card(x, w) = \Card (y, w) \}$ of words that contain the same number of $x$s and $y$s. This
  language is not regular.
\end{example}

\begin{figure}[tp]
  \begin{displaymath}
    \begin{array}{lcl}
  \concPart{\phi} & = & \phi
\\ \concPart{\varepsilon} & = & \varepsilon
\\ \concPart{x} & = & \phi
\\ \concPart{r + s} & = & \concPart{r} + \concPart{s}
\\ \concPart{r \conc s} & = & \concPart{r} \conc \concPart{s}
\\ \concPart{r^*} & = & \concPart{r}^*
\\ \concPart{\forkEff r} & = & \forkEff r
    \end{array}
    \qquad
    \begin{array}{lcl}
\seqPart{\phi} & = & \phi
\\ \seqPart{\varepsilon} & = & \phi
\\ \seqPart{x} & = & x
\\ \seqPart{r + s} & = & \seqPart{r} + \seqPart{s}
\\ \seqPart{r \conc s} & = &
\seqPart{r} \conc s +
\concPart{ r} \conc \seqPart{s}
\\ \seqPart{r^*} & = & \concPart{r}^* \conc \seqPart{r} \conc r^*
\\  \seqPart{\forkEff r} & = & \phi
    \end{array}
  \end{displaymath}
  \caption{Concurrent and sequential part of a behavior}
  \label{fig:seq-conc-part}
\end{figure}

Some of our proofs rely on semantic equivalence and employ identities from 
Kleene algebra~\cite{Kozen:1990:KAC:645720.663687} that
hold for standard regular expressions. Hence, we need to establish that forkable expressions also
form a Kleene algebra.

\begin{definition}[Semantic equality and containment]~\\[-1.5\baselineskip]
  \begin{enumerate}
  \item Behaviors $r$ and $s$ are equal, $r \semeq s$, if $L (r,K) = L (s,K)$, for all $K$.
  \item Behaviors $r$ and $s$ are contained, $r \semle s$, if $L (r,K) \subseteq L (s,K)$, for all $K$.
  \end{enumerate}
\end{definition}
\begin{theorem}
\label{th:behaviors-kleene-algebra}
  The set of forkable expressions with semantic equality and containment is a Kleene algebra. 
\end{theorem}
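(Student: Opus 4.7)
The plan is to verify that forkable expressions satisfy a standard axiomatization of Kleene algebra (in the sense of Kozen), where the natural order coincides with semantic containment $\semle$. Concretely, one needs to check: commutativity, associativity, and idempotence of $+$ with identity $\phi$; associativity of $\conc$ with identity $\varepsilon$; two-sided distributivity of $\conc$ over $+$; the annihilator laws $\phi \conc r \semeq \phi \semeq r \conc \phi$; the unfolding axioms $\varepsilon + r \conc r^* \semle r^*$ and $\varepsilon + r^* \conc r \semle r^*$; and the two induction axioms $r \conc s \semle s \Rightarrow r^* \conc s \semle s$ and $s \conc r \semle s \Rightarrow s \conc r^* \semle s$. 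Since none of these axioms mention $\forkEff{\cdot}$, the fork operator only has to be carried along as an inert constructor and is never unfolded during the proof.

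The key technical lemma I would first prove is that for every behavior $r$, the map $K \mapsto L(r,K)$ on $\Power(\Sigma^*)$ is monotone and distributes over arbitrary unions (in particular it preserves the empty union, so $L(r,\emptyset) = \emptyset$). This goes by induction on $r$: the base cases are immediate, the cases $+$ and $\conc$ follow because union-preservation is closed under pointwise union and under composition, the fork case $L(\forkEff r, K) = L(r) \| K$ is union-preserving because shuffle distributes over arbitrary unions in its second argument, and for $r^*$ one uses a parametric-fixpoint argument together with the continuity of the body.

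With this lemma in place, the idempotent semiring axioms fall out by a few lines of unfolding the definition of $L(\cdot, K)$ and using union-preservation. For instance, $L((r+s) \conc t, K) = L(r+s, L(t,K)) = L(r, L(t,K)) \cup L(s, L(t,K)) = L(r\conc t + s\conc t, K)$, and dually on the other side, while $L(r \conc \phi, K) = L(r, \emptyset) = \emptyset$. The left unfolding axiom is essentially the fixpoint equation $L(r^*, K) = L(r, L(r^*, K)) \cup K$; the right unfolding follows by showing that $L(r^*,K)$ is a prefixpoint of the functional associated with the right-hand side. The left induction axiom is also easy: if $L(r, L(s,K)) \subseteq L(s,K)$, then $L(s,K)$ is itself a prefixpoint of $X \mapsto L(r,X) \cup L(s,K)$, so $L(r^*, L(s,K)) \subseteq L(s,K)$.

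The main obstacle will be the right induction axiom $s \conc r \semle s \Rightarrow s \conc r^* \semle s$, because $r^*$ appears on the wrong side of $s$ and we cannot directly exhibit $L(s,K)$ as a prefixpoint of the functional defining $L(r^*, \cdot)$. Here I would expand the fixpoint as a Kleene chain, $L(r^*, K) = \bigcup_n f^n(\emptyset)$ with $f(X) = L(r,X) \cup K$, invoke the continuity of $L(s,\cdot)$ from the key lemma to pull the union outside, and then prove $L(s, f^n(\emptyset)) \subseteq L(s,K)$ by induction on $n$, using the hypothesis $L(s \conc r, K') \subseteq L(s, K')$ at $K' = f^{n-1}(\emptyset)$ in the step case. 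This continuity-based unrolling is the only place where more than purely algebraic manipulation is needed.
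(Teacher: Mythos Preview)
Your proposal is correct and follows the same high-level route as the paper: verify Kozen's axioms directly from the continuation semantics. The paper's proof is terser; it dispatches the semiring axioms and the two unfolding axioms as ``straightforward'' or ``by calculation,'' and handles both induction axioms uniformly by showing $r^n \conc s \semle s$ (respectively $s \conc r^n \semle s$) for all $n$ and then concluding the star version. Your explicit union-preservation lemma for $K \mapsto L(r,K)$ is a genuine addition: the paper never states it, but its passage from ``$r^n \conc s \semle s$ for all $n$'' to ``$r^* \conc s \semle s$'' tacitly relies on $L(r^*,K) = \bigcup_n L(r^n,K)$, which is exactly what your lemma supplies. Your prefixpoint argument for the left induction axiom is also a small simplification over the paper's $r^n$ induction, since it avoids the continuity step entirely on that side; for the right induction axiom your Kleene-chain unrolling is essentially the paper's $s \conc r^n$ argument rephrased in terms of the iterates $f^n(\emptyset)$.
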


Each behavior $r$ can be decomposed into a sequential part $\seqPart{r}$ and a concurrent part
$\concPart{r}$, which are defined by induction on $r$ in Figure~\ref{fig:seq-conc-part}. The intuition is that
the sequential part of a behavior describes what must happen next, inevitably, whereas the concurrent
part describes behavior that happens eventually and concurrent to the sequential behavior.
For example, in case of concatenation $r \conc s$, the sequential part
must either start with $\seqPart{r}$, or must end with $\seqPart{s}$.
For Kleene star $r^*$ it is similar, we simply consider the possible unrolling of the underlying
expression $r$.

Our decomposition theorem proves that every behavior is semantically equivalent to the union of its
concurrent part and its sequential part. Its proof requires the Kleene identity $r^* \semeq \varepsilon
+ r \conc r^*$.\footnote{We generally write $r=s$ for \emph{syntactic equality} of expressions and
  use other symbols like $r\semeq s$ for equivalences where some additional reasoning may be involved.}
\begin{theorem}
 \label{le:conc-seq-split}
  For all $r$, $r \semeq \concPart{r} + \seqPart{r}$.
\end{theorem}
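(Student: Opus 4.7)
The plan is to prove $r \semeq \concPart{r} + \seqPart{r}$ by structural induction on $r$, using Theorem~\ref{th:behaviors-kleene-algebra} to perform equational reasoning via Kleene algebra identities and the definitions in Figure~\ref{fig:seq-conc-part}.

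The base cases $r \in \{\phi, \varepsilon, x\}$ and the fork case $r = \forkEff{s}$ are immediate, since in each of these cases exactly one of $\concPart{r}$ and $\seqPart{r}$ coincides with $r$ while the other is the additive unit $\phi$. The sum case $r = s+t$ is a rearrangement of the four summands via commutativity and associativity of $+$, together with two invocations of the induction hypothesis. For the concatenation case $r = s \conc t$, unfolding the definitions gives $\concPart{s}\conc\concPart{t} + \seqPart{s}\conc t + \concPart{s}\conc\seqPart{t}$; applying the induction hypothesis to $t$ to expand $\seqPart{s}\conc t \semeq \seqPart{s}\conc\concPart{t} + \seqPart{s}\conc\seqPart{t}$ and then collecting terms via distributivity yields $(\concPart{s} + \seqPart{s})\conc(\concPart{t} + \seqPart{t})$, which is $\semeq s \conc t$ by the induction hypothesis.

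The real work is the Kleene star case $r = s^*$, where the goal is $s^* \semeq \concPart{s}^* + \concPart{s}^* \conc \seqPart{s} \conc s^*$. My plan is to first rewrite $s^* \semeq (\concPart{s} + \seqPart{s})^*$ using the induction hypothesis and congruence of $\cdot^*$, and then apply the denesting identity $(a+b)^* \semeq a^* \conc (b \conc a^*)^*$, valid in any Kleene algebra, with $a = \concPart{s}$ and $b = \seqPart{s}$. A single unfolding via $r^* \semeq \varepsilon + r \conc r^*$ applied to $(\seqPart{s} \conc \concPart{s}^*)^*$, followed by distribution of $\concPart{s}^*$, produces $\concPart{s}^* + \concPart{s}^* \conc \seqPart{s} \conc \bigl(\concPart{s}^* \conc (\seqPart{s} \conc \concPart{s}^*)^*\bigr)$; a second use of denesting, in the reverse direction, collapses the bracketed factor back to $(\concPart{s} + \seqPart{s})^* \semeq s^*$, yielding exactly the required form.

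The main obstacle lies in the star case: the unfolding identity alone is not enough, so one also needs the denesting identity (or an equivalent fixpoint/Arden-style argument that exploits the least-solution axiom of Kleene algebra) to reorganize $(\concPart{s} + \seqPart{s})^*$ into the required two-part shape before collapsing the inner star back to $s^*$. All remaining cases reduce to routine equational manipulation in the Kleene algebra of behaviors guaranteed by Theorem~\ref{th:behaviors-kleene-algebra}.
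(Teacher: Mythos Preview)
Your proof is correct and follows essentially the same inductive structure as the paper's proof; the concatenation case is handled by the same factoring argument (you expand $\seqPart{s}\conc t$ via the IH before collecting, whereas the paper factors $\concPart{r}\conc(\concPart{s}+\seqPart{s})$ first, but these are interchangeable). In the star case you explicitly derive the identity $a^* + a^* b (a+b)^* \semeq (a+b)^*$ via denesting plus one unfolding, which is exactly the step the paper asserts in a single line without naming the underlying identity; your presentation is thus a spelled-out version of the same argument.
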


The next lemma establishes some algebraic properties of the functions $\concPart{}$ and $\seqPart{}$
that we need in subsequent proofs.
\begin{lemma} 
\label{le:basic-seq-conc-props}
For all $r$:
  \begin{inparaenum}
  \item $\concPart{\concPart r} \syneq {\concPart r}$ (syntactic equality);
  \item $\concPart{\seqPart r} \semeq \phi$;
  \item $\seqPart{\concPart r} \semeq \phi$;
  \item $\seqPart{\seqPart r} \semeq {\seqPart r}$.
  \end{inparaenum}
\end{lemma}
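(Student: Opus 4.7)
All four statements proceed by structural induction on $r$, with statements (2)--(4) relying on the Kleene algebra identities from Theorem~\ref{th:behaviors-kleene-algebra} (in particular $\phi \conc t \semeq t \conc \phi \semeq \phi$ and $\phi + t \semeq t$). The plan is to prove (1), then (2), then (3) by induction, and to obtain (4) as an easy corollary of (2) together with Theorem~\ref{le:conc-seq-split}.

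For (1), a glance at Figure~\ref{fig:seq-conc-part} shows that every expression produced by $\concPart{\cdot}$ is built only from $\phi$, $\varepsilon$, constructors $+, \conc, {}^*$, and subterms of the form $\forkEff{t}$; no primitive event $x$ or outer sequential atom survives. Since $\concPart{\cdot}$ is the identity on each of these syntactic shapes, a routine structural induction gives $\concPart{\concPart r} \syneq \concPart r$ on the nose.

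For (2), I would go through the cases of $r$ and push $\concPart{\cdot}$ through using its defining clauses, applying the induction hypothesis and (1) to dispose of subterms. The interesting cases are concatenation, where $\concPart{\seqPart{r\conc s}}$ expands via distributivity of $\concPart{\cdot}$ over $+$ and $\conc$ into $\concPart{\seqPart r}\conc\concPart s + \concPart{\concPart r}\conc\concPart{\seqPart s}$, which simplifies by IH and (1) to $\phi\conc\concPart s + \concPart r\conc\phi \semeq \phi$; and the Kleene star, where $\concPart{\seqPart{r^*}} = \concPart{\concPart r^* \conc \seqPart r \conc r^*}$ again collapses by IH because the middle factor is $\concPart{\seqPart r}\semeq\phi$. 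Part (3) is entirely analogous: the only nontrivial cases are $\concPart{r}\conc\concPart{s}$ and $\concPart{r}^*$ within $\seqPart{\concPart{\cdot}}$, and both reduce by IH to sums of terms each containing a $\phi$ factor.

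Finally, (4) falls out without further induction: applying Theorem~\ref{le:conc-seq-split} with $\seqPart r$ in place of $r$ yields $\seqPart r \semeq \concPart{\seqPart r} + \seqPart{\seqPart r}$, and by (2) the first summand is $\semeq \phi$, so $\seqPart r \semeq \seqPart{\seqPart r}$. The main obstacle, such as it is, will be the concatenation and star cases of (2) and (3), where one must carefully track how $\concPart{\cdot}$ and $\seqPart{\cdot}$ distribute over $\conc$ and where the $\phi$-absorption and the identity (1) are invoked; everything else is bookkeeping.
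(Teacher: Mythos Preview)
Your plan for parts (1)--(3) matches the paper's proof essentially line by line: a trivial induction for (1), and inductions for (2) and (3) in which only the concatenation and star cases need any unfolding, exactly as you outline.

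For part (4), however, you take a different route. The paper proves (4) by a further structural induction on $r$, unfolding $\seqPart{\seqPart{r\conc s}}$ and $\seqPart{\seqPart{r^*}}$ and then invoking parts (1), (2), (3), and the inductive hypothesis to collapse the resulting four- and six-term sums back to $\seqPart{r\conc s}$ and $\seqPart{r^*}$. Your argument instead applies the decomposition theorem (Theorem~\ref{le:conc-seq-split}) to $\seqPart r$ and cancels the $\concPart{\seqPart r}$ summand via part (2). This is correct and noticeably cleaner: it eliminates the most laborious case analysis in the lemma at the cost of one forward reference to Theorem~\ref{le:conc-seq-split}, which in this paper is proved independently of Lemma~\ref{le:basic-seq-conc-props}, so no circularity arises. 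The paper's direct induction, by contrast, keeps Lemma~\ref{le:basic-seq-conc-props} self-contained and exhibits explicitly how all four idempotence/nilpotence facts interact in the compound cases.
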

\begin{proof}
  The proof for part~1 is by trivial induction on $r$. See the online version for the remaining
  parts; they are not needed in the rest of this paper.
  \qed
\end{proof}

\begin{lemma}
  \label{le:concpart-nullable}
  For all $r$, $\varepsilon \in L(r)$ iff $\varepsilon \semle \concPart{r}$.
\end{lemma}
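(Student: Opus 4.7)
The plan is to reduce the biconditional to the decomposition $r \semeq \concPart{r} + \seqPart{r}$ established in Theorem~\ref{le:conc-seq-split}, after setting up two auxiliary facts. The key auxiliary statement is a nullability characterization of the continuation-indexed semantics:
\begin{equation*}
  \varepsilon \in L(r, K) \iff \varepsilon \in L(r) \text{ and } \varepsilon \in K,
\end{equation*}
proved by structural induction on $r$. Base and choice cases are routine. For concatenation $L(r \conc s, K) = L(r, L(s, K))$, two applications of the inductive hypothesis suffice. For $r^*$, the least fixpoint $\mu X.\, L(r, X) \cup K$ always contains $K$, so $\varepsilon$ lies in it iff $\varepsilon \in K$. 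For $\forkEff{r}$, the identity $\varepsilon \| w = \{w\}$ yields $\varepsilon \in L(r) \| K$ iff both $\varepsilon \in L(r)$ and $\varepsilon \in K$.

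Next, I would establish $\varepsilon \semle s \iff \varepsilon \in L(s)$ for every behavior $s$. Direction $(\Rightarrow)$ is immediate by unfolding $\semle$ and specializing the universally quantified $K$ to $\{\varepsilon\}$. For $(\Leftarrow)$, one shows by induction on $s$ that $\varepsilon \in L(s)$ implies $K \subseteq L(s, K)$ for every $K$. The instructive cases are concatenation, which uses the nullability sub-lemma to split $\varepsilon \in L(r \conc t)$ into $\varepsilon \in L(r)$ and $\varepsilon \in L(t)$ and then combines the two inductive hypotheses via monotonicity of $L(r, \cdot)$, and fork, which uses $w \in \{\varepsilon\} \| \{w\}$.

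Finally, I would show by structural induction on $r$ that the sequential part is never nullable, i.e.\ $\varepsilon \notin L(\seqPart{r})$. The only nontrivial cases are concatenation and Kleene star, both of which appeal to the nullability sub-lemma: in $\seqPart{r \conc s} = \seqPart{r} \conc s + \concPart{r} \conc \seqPart{s}$ nullability of either summand forces $\seqPart{r}$ or $\seqPart{s}$ to be nullable, contradicting the inductive hypothesis, and analogously for $\seqPart{r^*} = \concPart{r}^* \conc \seqPart{r} \conc r^*$ where $\seqPart{r}$ appears as a middle factor. Combining everything yields $\varepsilon \in L(r) \iff \varepsilon \in L(\concPart{r}) \cup L(\seqPart{r}) \iff \varepsilon \in L(\concPart{r}) \iff \varepsilon \semle \concPart{r}$. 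The main obstacle is that the three ingredients genuinely interlock through the nullability sub-lemma; once that is in place, each remaining step is a clean structural induction driven by Figure~\ref{fig:seq-conc-part}.
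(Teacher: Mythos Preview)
Your proof is correct but takes a genuinely different route from the paper. The paper handles the two directions asymmetrically: for $(\Leftarrow)$ it appeals directly to the decomposition $r \semeq \concPart{r} + \seqPart{r}$ (so $\varepsilon \semle \concPart{r} \semle r$ gives $\varepsilon \in L(r)$), while for $(\Rightarrow)$ it runs a direct structural induction on $r$ that stays entirely on the $\semle$ side---e.g.\ from $\varepsilon \in L(r\conc s)$ it extracts $\varepsilon \in L(r)$ and $\varepsilon \in L(s)$, applies the inductive hypothesis to get $\varepsilon \semle \concPart{r}$ and $\varepsilon \semle \concPart{s}$, and multiplies to obtain $\varepsilon \semle \concPart{r}\conc\concPart{s} = \concPart{r\conc s}$; the star case is immediate because $\varepsilon \semle t^*$ holds in any Kleene algebra. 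In contrast, you make the decomposition do all the work for \emph{both} directions by first isolating three reusable facts (the nullability characterization of $L(r,K)$, the equivalence $\varepsilon \semle s \iff \varepsilon \in L(s)$, and the non-nullability of $\seqPart{r}$). Your approach is more modular---those auxiliary lemmas are natural statements in their own right---but it costs three separate inductions where the paper gets away with one. One minor point: in your $r^*$ case of the first sub-lemma, ``the fixpoint contains $K$, so $\varepsilon$ lies in it iff $\varepsilon \in K$'' only argues one implication; the converse needs a short pre-fixpoint argument (take $Y = \Sigma^* \setminus \{\varepsilon\}$ when $\varepsilon \notin K$ and check $F(Y)\subseteq Y$ using the inductive hypothesis).
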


\section{Derivatives}
\label{sec:derivatives}

We want to use Brzozowski's derivative operation~\cite{321249}
to translate behaviors to automata and to create algorithms 
for checking language containment and matching.
To this end, we extend derivatives to forkable expressions.
The derivative of $r$ w.r.t.~some symbol $x$,
written $\deriv{r}{x}$, yields the new behavior
after consumption of the leading symbol $x$.
The derivative operation for behaviors is defined by structural induction.
In addition to the regular operators, the derivative needs to
deal with $\forkEff{r}$ expressions and the case of concatenated expressions $r \conc s$ requires
special attention.

\begin{definition}[Derivatives]
The derivative of behavior $r$ w.r.t.~some symbol $x$ is defined
inductively as follows:
\begin{displaymath}
  \begin{array}{lcl}
 \deriv{\phi}{x} & = & \phi
\\
 \deriv{\varepsilon}{x} & = & \phi
\\
 \deriv{y}{x} & = & \left \{
                       \ba{ll}
                          \varepsilon & \mbox{if $x=y$}
                         \\ \phi & \mbox{otherwise}
                       \ea
                     \right.
  \end{array}
  \qquad
  \begin{array}{lcl}
 \deriv{r + s}{x} & = & \deriv{r}{x} + \deriv{s}{x}
\\ 
\color{blue}
    \deriv{r \conc s}{x} & = &
                               \color{blue}
                               \deriv{r}{x} \conc s + \concPart{r} \conc \deriv{s}{x}
    \\
    \deriv{r^*}{x} & = & \deriv{r}{x} \conc r^*
    \\
    \color{blue}
    \deriv{\forkEff{r}}{x} & = &
                                     \color{blue}
                                 \forkEff{\deriv{r}{x}}
  \end{array}
\end{displaymath}
\end{definition}
We just explain the cases that differ from Brzozowski's definition.
The derivative of a fork, $\forkEff{r}$, is simply pushed
down to the underlying expression.
The derivative of $r \conc s$ consists of two components. The first one, $\deriv{r}{x} \conc s$, is
identical to the standard definition: it computes the derivative of $r$ and continues with $s$. 
The second one covers symbols that may reach $s$. In a fork-free regular expression, a symbol in $s$ can
only be consumed if $r$ is nullable, i.e.~$\varepsilon \in L(r)$. For forkable behaviors, a symbol in $s$ can also be consumed if
$r$ exhibits concurrent behavior. 
Hence, we extract the concurrent behavior $\concPart{r}$ and concatenate it with the derivative of  $s$.
The concurrent behavior generalizes nullability in the sense that $\varepsilon \in \concPart{r}$ iff $\varepsilon \in L(r)$.
See Lemma~\ref{le:concpart-nullable}.

Next, we verify that the derivative operation is correct
in the sense that the resulting expression $\deriv{r}{x}$ denotes the left quotient of $r$ by $x$.

\begin{theorem}[Left Quotients]
\label{th:deriv-correctness}
 Let $r$ be a behavior and $x$ be a symbol.
 Then, we have that 
 $L (\deriv{r}{x}) = \leftQ x {L (r)}$.
\end{theorem}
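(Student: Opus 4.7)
The plan is to prove by structural induction on $r$ a strengthened statement that accommodates an arbitrary continuation:
\[
\leftQ{x}{L(r, K)} = L(\deriv{r}{x}, K) \cup L(\concPart{r}, \leftQ{x}{K}).
\]
The theorem follows by setting $K = \{\varepsilon\}$: then $\leftQ{x}{K} = \emptyset$, and a routine induction shows $L(s, \emptyset) = \emptyset$ for every $s$, including $\concPart{r}$, so the second disjunct collapses.

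Before the main induction I would establish two workhorse lemmas: (a) distributivity $L(s, \bigcup_i K_i) = \bigcup_i L(s, K_i)$, by induction on $s$ (the star case follows since the fixed-point operator preserves arbitrary unions of monotone functions); and (b) the shuffle-quotient identity $\leftQ{x}{A \| B} = (\leftQ{x}{A}) \| B \cup A \| (\leftQ{x}{B})$, by a direct case split on whether the leading $x$ is contributed by the $A$-word or the $B$-word. I would also record the structural fact $L(\concPart{r}, T) = L(\concPart{r}) \| T$, proved by induction on the fork-only shape of $\concPart{r}$, since it is used repeatedly to convert nested continuation-style expressions into plain shuffles.

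The base cases ($\phi$, $\varepsilon$, single symbol) are direct, and the sum case is immediate from the IH and $L(\cdot, K)$-distributivity. For $r_1 \conc r_2$, I would apply the IH first to $r_1$ with continuation $L(r_2, K)$ and then to $r_2$ with continuation $K$; distributing $L(\concPart{r_1}, \cdot)$ over the resulting union with the help of (a) turns the three summands into exactly $L(\deriv{r_1}{x} \conc r_2, K) \cup L(\concPart{r_1} \conc \deriv{r_2}{x}, K) \cup L(\concPart{r_1} \conc \concPart{r_2}, \leftQ{x}{K})$, which is the desired right-hand side. The fork case uses the specialization of the strengthened statement at $K = \{\varepsilon\}$ (giving $L(\deriv{r_1}{x}) = \leftQ{x}{L(r_1)}$, as $L(\concPart{r_1}, \emptyset) = \emptyset$) combined with lemma (b) on the shuffle that defines $L(\forkEff{r_1}, K)$.

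The Kleene star is the main obstacle. Unfolding $L(r^*, K) = K \cup L(r, L(r^*, K))$ and applying the concatenation IH produces the fixed-point equation
\[
Y = \leftQ{x}{K} \cup L(\deriv{r}{x}, L(r^*, K)) \cup L(\concPart{r}, Y)
\]
for $Y = \leftQ{x}{L(r^*, K)}$. Solving via the identity $\mu Z.\, A \cup L(\concPart{r}, Z) = L(\concPart{r}^*, A)$ and the union-distributivity lemma yields
\[
Y = L(\concPart{r}^* \conc \deriv{r}{x} \conc r^*, K) \cup L(\concPart{r}^*, \leftQ{x}{K}).
\]
To match the target form $L(\deriv{r}{x} \conc r^*, K) \cup L(\concPart{r^*}, \leftQ{x}{K})$, the key remaining step is the Kleene-algebraic containment
\[
\concPart{r}^* \conc \deriv{r}{x} \conc r^* \semle \deriv{r}{x} \conc r^*
\]
(the converse is trivial since $\varepsilon \semle \concPart{r}^*$). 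I would prove this via the $*$-induction rule, reducing it to $\concPart{r} \conc \deriv{r}{x} \conc r^* \semle \deriv{r}{x} \conc r^*$, which, using $L(\concPart{r}, T) = L(\concPart{r}) \| T$, amounts to showing that $L(\deriv{r}{x} \conc r^*, K)$ is closed under shuffle with $L(\concPart{r})$. That closure is inherited through $L(\deriv{r}{x}, \cdot)$ from the corresponding closure of $L(r^*, K)$, which itself follows from $\concPart{r} \semle r$ together with $L(r^*, K) = K \cup L(r, L(r^*, K))$. Establishing this absorption cleanly---making precise why a new $\concPart{r}$-iteration can always be re-attributed inside the surrounding $r^*$ without disturbing the $\deriv{r}{x}$-prefix---is the most delicate part of the argument, and is where I expect most of the work to lie.
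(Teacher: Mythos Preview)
Your generalized hypothesis and the non-star cases coincide with the paper's proof (the paper writes the second summand as $L(\concPart{r})\|(\leftQ{x}{K})$, which is your $L(\concPart{r},\leftQ{x}{K})$ via the lemma you call~(c); this is exactly the paper's Lemma~\ref{le:split-concpart-async}). For $r^*$ the paper argues via the finite powers $r^n$ rather than by direct fixed-point manipulation, but your unfolding reaches the same intermediate form $L(\concPart{r}^*\conc\deriv{r}{x}\conc r^*,K)\cup L(\concPart{r}^*,\leftQ{x}{K})$.

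The genuine gap is the absorption step you single out: the containment $\concPart{r}^*\conc\deriv{r}{x}\conc r^*\semle\deriv{r}{x}\conc r^*$ is \emph{false}, and your ``inheritance'' justification does not work because $T\mapsto L(\deriv{r}{x},T)$ need not commute with shuffling by $L(\concPart{r})$. Take $r=\forkEff{a}+b\conc c$ and $x=b$. Then $\concPart{r}\semeq\forkEff{a}$ and $\deriv{r}{b}\semeq c$, so $L(\deriv{r}{b},T)=c\conc T$; even though $L(r^*)$ is closed under $\|\{a\}$, the set $c\conc L(r^*)$ is not (it contains no word starting with $a$). Concretely, $bac\in\{a\}\|\{bc\}\subseteq L(r^*)$, hence $ac\in\leftQ{b}{L(r^*)}$, yet $L(\deriv{r^*}{b})=L(\deriv{r}{b}\conc r^*)=c\conc L(r^*)\not\ni ac$. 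Thus the theorem itself fails for the paper's definition $\deriv{r^*}{x}=\deriv{r}{x}\conc r^*$. The paper's appendix proof has the matching gap: it correctly establishes the identity for each $r^n$, but $\bigcup_n L(\deriv{r^n}{x},K)$ equals $L(\concPart{r}^*\conc\deriv{r}{x}\conc r^*,K)$, not $L(\deriv{r}{x}\conc r^*,K)$, so the passage to the limit does not yield the stated conclusion. Your fixed-point calculation is therefore not merely ``delicate'' at this point---it correctly isolates what $\deriv{r^*}{x}$ would have to be (namely $\concPart{r}^*\conc\deriv{r}{x}\conc r^*$) and shows that the given definition is too small.
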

\begin{proof}
  To obtain a viable inductive hypothesis for the proof, 
  we need to
  expand the definition of $L (r) = L (r , \{\varepsilon\})$ 
  and to generalize the statement to an arbitrary
  continuation language $K\subseteq \Sigma^*$. That is, we set out to
  prove, by induction on $r$:
  \begin{equation}\label{eq:1}
    \forall r.\ \forall K.\ 
    L (\deriv{r}{x}, K)  \cup L (\concPart r) \| (\leftQ x K) = \leftQ x {L (r , K)}
  \end{equation}
  The original statement follows from the generalized hypothesis~\eqref{eq:1} by setting
  $K=\{\varepsilon\}$ (recall that $L \| \emptyset = \emptyset$):
  \begin{align*}
    L (\deriv{r}{x})
    & = L (\deriv{r}{x})  \cup L (\concPart r) \| \emptyset \\
    & = 
      L (\deriv{r}{x}, \{\varepsilon\})  \cup L (\concPart r) \| (\leftQ x {\{\varepsilon\}}) \\
    & \stackrel{\eqref{eq:1}}{=} \leftQ x {L (r , \{\varepsilon\})}
        = \leftQ x {L (r)} 
  \end{align*}



  The proof of \eqref{eq:1} proceeds by induction on $r$.
\qed
\end{proof}

Like in the standard regular expression case,
we can conclude (based on the above result) that each behavior
can be represented as a sum of its derivatives.

\begin{theorem}[Representation]
 For any behavior $r$, we have
   $L(r) = (\varepsilon \in L(r) \implies \{ \varepsilon \})
           \cup \bigcup_{x \in \Sigma} x\conc L(\deriv{r}{x})$.
\end{theorem}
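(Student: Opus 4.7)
The plan is to prove this by a direct case analysis on words in $L(r)$, leveraging Theorem~\ref{th:deriv-correctness} to translate between derivatives and left quotients. The conditional expression $(\varepsilon \in L(r) \implies \{\varepsilon\})$ is to be read as the set $\{\varepsilon\}$ when $\varepsilon \in L(r)$ and $\emptyset$ otherwise, so the overall claim is essentially that $L(r)$ equals its decomposition by first symbol, plus a special slot for the empty word.

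For the $\subseteq$ direction, I would take an arbitrary $w \in L(r)$ and split into two cases. If $w = \varepsilon$, then $\varepsilon \in L(r)$ holds and $w$ lies in the first component of the right-hand side. Otherwise $w = x \conc w'$ for some $x \in \Sigma$ and $w' \in \Sigma^*$. By definition of the left quotient, $w' \in \leftQ{x}{L(r)}$, and Theorem~\ref{th:deriv-correctness} gives $\leftQ{x}{L(r)} = L(\deriv{r}{x})$. Hence $w \in x \conc L(\deriv{r}{x})$, which is one of the summands of the big union on the right.

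For the $\supseteq$ direction, I would treat the two components of the right-hand side separately. The first component contributes $\varepsilon$ only when $\varepsilon \in L(r)$, which is immediate. For the second, any $w \in x \conc L(\deriv{r}{x})$ has the shape $w = x \conc w'$ with $w' \in L(\deriv{r}{x}) = \leftQ{x}{L(r)}$, again by Theorem~\ref{th:deriv-correctness}. Unfolding the definition of the left quotient yields $x \conc w' \in L(r)$, i.e., $w \in L(r)$.

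No part of this argument is genuinely difficult: the entire content is in Theorem~\ref{th:deriv-correctness} together with the trivial observation that every non-empty word factors as $x \conc w'$. The only mild subtlety is bookkeeping around the conditional set $(\varepsilon \in L(r) \implies \{\varepsilon\})$ and making sure that the treatment of $\varepsilon$ is not accidentally duplicated on the right-hand side; since $\varepsilon \notin x \conc L(\deriv{r}{x})$ for any $x$, this is immediate. Thus the proof reduces to two short case analyses and one invocation of the Left Quotients theorem in each direction.
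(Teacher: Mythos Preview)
Your proposal is correct and matches the paper's approach: the paper does not spell out a proof but simply remarks that the representation follows ``based on the above result'' (Theorem~\ref{th:deriv-correctness}), which is exactly the reduction you carry out. Your two-direction case analysis is the standard unpacking of that remark.
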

Expression $(\varepsilon \in L(r) \implies \{ \varepsilon \})$ denotes $\{ \}$
if $\varepsilon \in L(r)$, otherwise, $\{ \}$.

The representation theorem is the basis for solving the word problem with derivatives. Here, we
extend the derivative operation to words as usual by $\deriv{r}{\varepsilon} = r$ and 
$\deriv{r}{aw} = \deriv{\deriv{r}{a}}{w}$.

\begin{corollary}
  For a behavior $r$ and $w\in\Sigma^*$, $w \in L (r)$ iff $\varepsilon \in L (\deriv{r}{w})$.
\end{corollary}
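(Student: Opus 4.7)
The plan is to prove the corollary by straightforward induction on the length of the word $w$, using Theorem~\ref{th:deriv-correctness} as the key ingredient. Since the extension of the derivative operation to words is defined inductively ($\deriv{r}{\varepsilon}=r$ and $\deriv{r}{aw}=\deriv{\deriv{r}{a}}{w}$), induction on $w$ matches this structure exactly.

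For the base case $w=\varepsilon$, the claim reduces to the tautology $\varepsilon \in L(r)$ iff $\varepsilon \in L(r)$, because $\deriv{r}{\varepsilon}=r$ by definition.

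For the inductive step with $w=aw'$, I would rewrite $\varepsilon \in L(\deriv{r}{aw'})$ as $\varepsilon \in L(\deriv{\deriv{r}{a}}{w'})$, then apply the induction hypothesis to the behavior $\deriv{r}{a}$ and the shorter word $w'$, obtaining the equivalence with $w' \in L(\deriv{r}{a})$. Theorem~\ref{th:deriv-correctness} then rewrites $L(\deriv{r}{a})$ as the left quotient $\leftQ{a}{L(r)}$, and by the definition of left quotient, $w' \in \leftQ{a}{L(r)}$ is exactly $a \conc w' = w \in L(r)$. Chaining these equivalences gives the claim.

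There is no real obstacle here: the corollary is essentially a bookkeeping consequence of Theorem~\ref{th:deriv-correctness} iterated along the word. The only subtlety worth noting in the write-up is that one must use the left-quotient form of the derivative theorem rather than the representation theorem directly, since the representation theorem mixes the nullability condition with an indexed union and would require slightly more unpacking to yield the clean iff-statement.
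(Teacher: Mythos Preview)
Your proof is correct and essentially matches the paper's intended argument: the corollary is stated immediately after the Representation Theorem without an explicit proof, and both it and your direct approach ultimately rest on iterating Theorem~\ref{th:deriv-correctness} along the word. Your choice to invoke the left-quotient theorem directly rather than unpacking the Representation Theorem is, if anything, slightly cleaner, but there is no substantive difference.
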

This corollary implies decidability of the word problem for forkable expressions: the derivative is
computable and the nullability test $\varepsilon \in L (\deriv{r}{w})$ is a syntactic test as for
standard regular expressions.
Full details how to compute all dissimilar derivatives
can be found in the online version.

To construct an automaton from an expression $r$, Brzozowski repeatedly takes the derivative with
respect to all symbols $x\in\Sigma$. We call these derivatives \emph{descendants}.
\begin{definition}[Descendants]
  A \emph{descendant} $s$ of a behavior $r$  is either $r$ itself, a derivative of $r$, or the derivative of a descendant.
  We write $s \DirectDescendant r$, if $s$ is \emph{a direct descendant} of $r$, that is, if $s =
  \deriv{r}{x}$, for some $x$. The ``is descendant of'' relation is the reflexive, transitive closure
  of the direct descendant relation: $s \Descendant r = s \DirectDescendant^* r$.
  The ``is a true descendant of'' relation is the transitive closure
  of the direct descendant relation: $s \TrueDescendant r = s \DirectDescendant^+ r$.
  We define $\deriv{r}{} = \{ s \mid s \Descendant r \}$ as the set of descendants of $r$.
\end{definition}

For standard regular expressions, Brzozowski showed that the set of
descendants of an expression is finite up to \emph{similarity}.
Two expressions are similar if they are equal
modulo associativity, commutativity, and idempotence.
This result no longer holds in our setting.

In the following, we write 
$r \stackrel{x}{\longrightarrow}  s$ if $s = \deriv{r}{x}$.
Subterms on which the derivation operation is applied are underlined.
\begin{example}
\label{ex:infinite-desc}
Let $r = (\forkEff{x \conc y})^*$ and take the derivative by $x$ repeatedly.
\bda{ll}
 & (\forkEff{x \conc y})^*
\\ \stackrel{x}{\longrightarrow} & \underline{\forkEff{y} \conc r}
\\ \stackrel{x}{\longrightarrow} & \forkEff{\phi} \conc r + \underline{\forkEff{y}} \conc \underline{\forkEff{y} \conc r}
\\ \stackrel{x}{\longrightarrow} & \dots + \forkEff{\phi} \conc \forkEff{y} \conc r + \forkEff{y} \conc (\forkEff{\phi} \conc r + \forkEff{y} \conc \forkEff{y} \conc r)
\\ \stackrel{x}{\longrightarrow} & \dots
\eda
Here we omit parentheses (assuming associativity) and 
apply equivalences such as $\concPart{\concPart{r}} \syneq \concPart{r}$ (Lemma~\ref{le:basic-seq-conc-props}).
Clearly, we obtain an increasing sequence of behaviors of the form 
$\forkEff{y} \conc ... \conc \forkEff{y} \conc r$.
Hence, the set of descendants of $r$ is infinite
even if we consider behaviors equal modulo 
associativity, commutativity, and idempotence of alternatives.
\end{example}

This observation is no surprise, given that behaviors may give rise to non-regular
languages (cf.\ Example~\ref{example:behaviors-1}).
In general, there is no hope to retain Brzozowski's result, but it turns out that we can find a
well-behavedness condition for behaviors that is sufficient to retain finiteness of descendants. 

\section{Well-Behaved Behaviors}
\label{sec:well-behaved-behaviors}

In this section, we develop a criterion to guarantee that a forkable expression only gives rise to a
finite set of dissimilar descendants. To start with, 
we adapt Brzozowski's notion of similarity to our setting.
In addition to associativity, commutativity, and idempotence
we introduce simplification rules that implement further Kleene identities and that deal
with forks. 

\begin{figure}[tp]
\bda{cc}
\rlabel{Refl, Trans, Sym, Comp} &
r \eqsim r
\qquad
\myirule{r \eqsim s \quad s \eqsim t}
        {r \eqsim t}
        \qquad
\myirule{s \eqsim t}{t \eqsim s}
\qquad
\myirule{s \eqsim t}{E[s] \eqsim E[t]}
\\[3ex]
\rlabel{Assoc, Comm} &
r + (s + t) \eqsim (r + s) + t
\qquad
r + s \eqsim s + r
\\
\\
\rlabel{Idem, Unit} & r + r \eqsim r
\qquad
r + \phi \eqsim r
\qquad
\phi + r \eqsim r
\\
\\
\rlabel{Empty Word} &
\varepsilon \conc r \eqsim r
\qquad r \conc \varepsilon \eqsim r
\qquad \varepsilon^* \eqsim \varepsilon
\qquad \color{blue} \forkEff{\varepsilon} \eqsim \varepsilon
\\
\\
\rlabel{Empty Language} &
\phi \conc r \eqsim \phi
\qquad r \conc \phi \eqsim \phi
\qquad \phi^* \eqsim \varepsilon
\qquad \color{blue} \forkEff{\phi} \eqsim \phi
\\
\\
\rlabel{Regular Contexts} &
E ::= [] \mid E^* \mid E\conc s \mid r \conc E \mid E+ s \mid r + E \mid \forkEff{E}
\eda
  \caption{Rules and axioms for similarity}
  \label{fig:similarity}
\end{figure}

\begin{definition}[Similarity]
  Behaviors $r$ and $s$ are \emph{similar},
  if $r \eqsim s$ is derivable using the rules and axioms in
  Figure~\ref{fig:similarity}. 
\end{definition}
The compatibility rule \rlabel{Comp} uses regular contexts $E$, which 
are regular expressions with a single hole $[]$. In the rule, we write $E[t]$ to denote
the expression  with the
hole replaced by $t$.

We establish some basic results for similar behaviors, all with straightforward inductive proofs:
Similarity implies semantic equivalence, it is complete for recognizing $\varepsilon$ and
$\phi$, and it is compatible with derivatives and extraction of concurrent parts.

\begin{lemma}\label{le:similarity-is-correct}
  If $r \eqsim s$, then $r \semeq s$.
\end{lemma}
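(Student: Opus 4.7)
The plan is to proceed by rule induction on the derivation of $r \eqsim s$ from Figure~\ref{fig:similarity}. Since $\semeq$ is, by its very definition, an equivalence relation inherited from set equality of $L(\cdot,K)$ for every $K$, the cases for Refl, Sym, and Trans are immediate. What remains is to verify that each axiom schema is sound and that the compatibility rule Comp preserves $\semeq$.

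For the axiom schemas, the arguments would be routine unfoldings of the clauses in Figure~\ref{fig:trace-language}. The associativity, commutativity, idempotence, and unit laws for $+$ translate directly into set-theoretic identities on $L(r,K)\cup L(s,K)$. The empty-word axioms $\varepsilon\conc r\semeq r$ and $r\conc\varepsilon\semeq r$ follow from $L(\varepsilon,K)=K$ together with $L(r\conc s,K)=L(r,L(s,K))$. The two new fork axioms are one-liners using $L(\forkEff r,K) = L(r)\|K$: from $L(\varepsilon)=\{\varepsilon\}$ we get $\{\varepsilon\}\|K=K=L(\varepsilon,K)$, and from $L(\phi)=\emptyset$ we get $\emptyset\|K=\emptyset=L(\phi,K)$. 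The empty-language laws for concatenation need a small auxiliary fact $L(r,\emptyset)=\emptyset$ for all $r$, established by straightforward induction on $r$ whose fork case invokes $M\|\emptyset=\emptyset$ (as already used in the proof of Theorem~\ref{th:deriv-correctness}). The star axioms $\varepsilon^*\semeq\varepsilon$ and $\phi^*\semeq\varepsilon$ require a brief fixpoint computation: $L(\varepsilon^*,K)=\mu X.\,X\cup K$ and $L(\phi^*,K)=\mu X.\,\emptyset\cup K$ both reduce to $K=L(\varepsilon,K)$ by Tarski's theorem.

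For the Comp rule, I would first prove a separate congruence lemma: if $r\semeq s$, then $E[r]\semeq E[s]$ for every regular context $E$, by induction on $E$. The alternative, concatenation, and fork cases are direct once one notes that $L(\forkEff{E[r]},K)=L(E[r])\|K$ and that $r\semeq s$ yields $L(E[r])=L(E[s])$ by instantiating at $K=\{\varepsilon\}$; the concatenation case additionally relies on the monotonicity of $K\mapsto L(t,K)$ stated just after the definition of $L$. The star case $E^*$ reduces to showing $\mu X.\,L(E[r],X)\cup K = \mu X.\,L(E[s],X)\cup K$; the induction hypothesis, applied at the generic continuation $X$, makes the two functionals being iterated pointwise equal, so their least fixpoints coincide by Tarski's theorem.

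The proof is mostly mechanical; the only genuine technical subtlety is the $E^*$ case, where one must remember to invoke the induction hypothesis at an arbitrary intermediate language $X$ rather than only at $K$, before appealing to uniqueness of least fixpoints. All remaining cases are a matter of unfolding $L(\cdot,K)$ and discharging standard set-theoretic identities uniformly in the continuation $K$, so a clean write-up should first dispatch the congruence lemma and then run through the axioms one by one.
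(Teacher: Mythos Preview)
Your proposal is correct and matches the paper's intended approach: the paper simply states that this lemma has a ``straightforward inductive proof'' and gives no further detail, and your rule induction on the derivation of $r \eqsim s$, together with the separate congruence lemma for contexts $E$, is exactly such a proof. One small remark: in the $t\conc E$ case you invoke monotonicity of $K\mapsto L(t,K)$, but in fact you only need that it is a well-defined function (equal arguments give equal results); this is harmless but slightly over-justified.
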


\begin{lemma}
\label{le:eqsim-eps-phi}
\begin{inparaenum}
\item If $L(r) = \{ \varepsilon \}$ then $r \eqsim \varepsilon$.
\item If $L(r) = \{ \}$ then $r \eqsim \phi$.
\end{inparaenum}
\end{lemma}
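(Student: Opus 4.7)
The plan is to prove both parts simultaneously by induction on the structure of $r$, after establishing one auxiliary claim up front:
\begin{equation*}
L(r) \conc K \subseteq L(r, K) \quad \text{for every $r$ and $K$.}
\end{equation*}
This is a straightforward induction on $r$. The only non-trivial cases are $\forkEff{s}$, where $L(\forkEff{s}, K) = L(s) \| K$ contains $L(s) \conc K$ because concatenation is a particular interleaving, and $s^*$, which uses fixpoint induction on the chain defining $\mu X.\,L(s,X) \cup K$. The corollary I actually need is the contrapositive: if $K \neq \emptyset$ and $L(r,K) = \emptyset$, then $L(r) = \emptyset$.

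Given the claim, the induction is largely mechanical for the base cases $\phi,\varepsilon,x$ and for $r = s+t$ (split the union into the appropriate combinations of $\emptyset$ and $\{\varepsilon\}$, apply the IH summand-wise, and close with (Unit)). The case $r = \forkEff{s}$ collapses to the IH since $L(\forkEff{s}) = L(s) \| \{\varepsilon\} = L(s)$, and the newly added axioms $\forkEff{\varepsilon} \eqsim \varepsilon$ and $\forkEff{\phi} \eqsim \phi$ from Figure~\ref{fig:similarity} discharge both parts. For $r = s^*$, part~2 is vacuous because $\varepsilon \in L(s^*) = \mu X.\,L(s,X) \cup \{\varepsilon\}$; for part~1, inspecting that fixpoint shows $L(s^*) = \{\varepsilon\}$ forces $L(s) \subseteq \{\varepsilon\}$, so the IH (or part~2) reduces $s^*$ to $\varepsilon^*$ or $\phi^*$, and the axioms in the empty-word and empty-language groups finish it.

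The real work sits in the concatenation case $r = s \conc t$, and I expect this to be the main obstacle. I split on whether $L(t) = \emptyset$. If it is, then $t \eqsim \phi$ by IH and the rules $s \conc \phi \eqsim \phi$ and $\phi \conc t \eqsim \phi$ resolve everything (in part~1, this branch is ruled out since it would force $L(r) = \emptyset$). If $L(t) \neq \emptyset$, then by the auxiliary claim $L(s) \conc L(t) \subseteq L(s, L(t)) = L(r)$. For part~2 this gives $L(s) = \emptyset$ directly, hence $s \eqsim \phi$ by IH and $r \eqsim \phi$. For part~1 the inclusion becomes $L(s) \conc L(t) \subseteq \{\varepsilon\}$, which admits two alternatives: either $L(s) = \emptyset$, or $L(s) = L(t) = \{\varepsilon\}$. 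The first alternative is excluded by a soundness argument against Lemma~\ref{le:similarity-is-correct}: IH gives $s \eqsim \phi$, soundness then forces $L(r) = L(\phi \conc t) = \emptyset$, contradicting $L(r) = \{\varepsilon\}$. The surviving alternative lets me apply the IH twice and close with $\varepsilon \conc \varepsilon \eqsim \varepsilon$. Keeping this last sub-case analysis clean enough that the appeal to soundness does not look circular will be the delicate point of the proof.
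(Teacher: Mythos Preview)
Your approach is essentially the paper's: both proceed by simultaneous induction on $r$ and dispatch the same cases in the same order. Where you differ is in rigor at the concatenation case. The paper simply writes $L(r\conc s) = L(r)\conc L(s)$, which is false for forkable expressions (take $r=\forkEff{x}$, $s=y$), and in the star case writes that $L(r^*)=\{\varepsilon\}$ implies $L(r)=\{\varepsilon\}$, which overlooks $L(r)=\emptyset$. Your auxiliary inclusion $L(r)\conc K \subseteq L(r,K)$ is exactly what is needed to repair both of these slips, and your soundness detour via Lemma~\ref{le:similarity-is-correct} to exclude the $L(s)=\emptyset$ branch in part~1 is legitimate and not circular (the IH is invoked on the strictly smaller $s$, and Lemma~\ref{le:similarity-is-correct} is independent).

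One small gap in your outline: you flag only the $\forkEff{s}$ and $s^*$ cases of the auxiliary claim as non-trivial, but the concatenation case is where the stated form is too weak to close the induction. For $r\conc s$ you need $L(r,L(s))\conc K \subseteq L(r,L(s,K))$; the IH for $s$ and monotonicity reduce this to $L(r,L(s))\conc K \subseteq L(r,L(s)\conc K)$, which is an instance of the stronger statement $L(r,A)\conc K \subseteq L(r,A\conc K)$ rather than of your stated claim. Prove that generalized form instead (all cases go through, including $\forkEff{r}$ since $(L(r)\|A)\conc K \subseteq L(r)\|(A\conc K)$ and $r^*$ by your fixpoint-induction argument), then specialize $A=\{\varepsilon\}$.
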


\begin{lemma}
 \label{le:sim-deriv-conc}
  \begin{enumerate}
  \item If $r \eqsim r'$, then $\deriv{r}{x} \eqsim \deriv{r'}{x}$, for all $x\in\Sigma$.
  \item If $r \eqsim r'$, then $\concPart{r} \eqsim \concPart{r'}$.
  \end{enumerate}
\end{lemma}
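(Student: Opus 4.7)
The plan is to prove both parts by structural induction on the derivation of $r \eqsim r'$, handling part~2 first so that part~1 can appeal to it in the concatenation subcase of the compatibility rule.

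For part~2, the reflexivity, transitivity and symmetry cases are immediate. For each axiom of Figure~\ref{fig:similarity}, I compute $\concPart{\cdot}$ on both sides and verify that the results are again related by a short chain of the same axioms. The routine cases are associativity, commutativity, idempotence, and the empty-language/empty-word rules involving $+$, all of which commute cleanly with $\concPart{\cdot}$ because the definition of $\concPart{\cdot}$ on $+$ and $\conc$ is homomorphic (e.g.\ $\concPart{\varepsilon \conc r} = \varepsilon \conc \concPart r \eqsim \concPart r$). The fork axioms are the interesting base cases: for $\forkEff{\varepsilon} \eqsim \varepsilon$ we compute $\concPart{\forkEff{\varepsilon}} = \forkEff{\varepsilon}$, which is again similar to $\varepsilon = \concPart{\varepsilon}$ by the very same axiom, and analogously for $\forkEff{\phi} \eqsim \phi$.

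The compatibility case of part~2 needs a nested induction on the context $E$. Given the premise $s \eqsim t$ and the inductive hypothesis $\concPart s \eqsim \concPart t$, I walk through the grammar of $E$. The cases $E = F \conc u$, $u \conc F$, $F+u$, $u+F$, $F^*$ follow from the sub-IH plus one application of \rlabel{Comp} in a smaller regular context. The subtle case is $E = \forkEff F$, where $\concPart{\forkEff{F[s]}} = \forkEff{F[s]}$ and $\concPart{\forkEff{F[t]}} = \forkEff{F[t]}$: the sub-IH on $\concPart{\cdot}$ is not strong enough, but applying \rlabel{Comp} to the outer premise $s \eqsim t$ with the context $\forkEff F$ directly yields $\forkEff{F[s]} \eqsim \forkEff{F[t]}$.

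For part~1, the reflexivity, transitivity and symmetry cases are again immediate, and each axiom is discharged by a direct calculation. For example, for $\varepsilon \conc r \eqsim r$ the derivative of the left-hand side is $\phi \conc r + \varepsilon \conc \deriv{r}{x}$, which simplifies to $\deriv r x$ by the empty-language, empty-word, and unit axioms; the fork axioms behave similarly, e.g.\ $\deriv{\forkEff{\varepsilon}}{x} = \forkEff{\phi} \eqsim \phi = \deriv{\varepsilon}{x}$. The compatibility rule is again handled by nested induction on $E$. The non-trivial case is $E = F \conc u$ (and symmetrically $u \conc F$), whose derivative is $\deriv{F[\cdot]}{x} \conc u + \concPart{F[\cdot]} \conc \deriv{u}{x}$; here I need both the sub-IH $\deriv{F[s]}{x} \eqsim \deriv{F[t]}{x}$ and the fact $\concPart{F[s]} \eqsim \concPart{F[t]}$. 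The latter is obtained by applying \rlabel{Comp} to the outer premise $s \eqsim t$ to get $F[s] \eqsim F[t]$, then invoking the already-established part~2. The remaining subcases ($E = [\,]$, $F^*$, $F+u$, $u+F$, $\forkEff F$) follow from the sub-IH combined with \rlabel{Comp} applied inside the appropriate smaller context.

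The main obstacle is the interaction in the compatibility case: the naïve sub-IH on $\concPart{\cdot}$ alone is too weak to push through fork contexts, and the derivative case uses $\concPart{\cdot}$ inside its output, forcing the two parts to be interleaved. Organising the induction so that part~2 is available as a lemma when discharging the concatenation subcase of part~1, and exploiting \rlabel{Comp} on the outer premise inside the nested induction on $E$, is what makes the argument go through.
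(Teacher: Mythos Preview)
Your proposal is correct and follows the induction on the derivation of $r \eqsim r'$ that the paper describes as a ``straightforward inductive proof'' without spelling out the details. You have correctly identified the only genuine subtleties the paper glosses over: that the \rlabel{Comp} case requires a nested induction on the context $E$, and that the concatenation subcase of part~1 needs part~2 (via $\concPart{F[s]} \eqsim \concPart{F[t]}$), which is why the two parts must be proved together rather than independently.
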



Similarity is an equivalence relation. 
We write $[s] = \{ t \mid t \eqsim s \}$ to denote the equivalence class of 
all expressions similar to $s$. If $R$ is a set of
behaviors, we write $R/{\eqsim} = \{ [r] \mid r \in R \}$ for the set of equivalence classes of
elements of $R$.

To identify the set of well-behaved behaviors, we need
to characterize the set of dissimilar descendants.
First, we establish that  each composition of derivatives and applications of $\concPart{}$
that finishes in some $\concPart{r}$ may be compressed to the composition of the derivatives applied to
the remaining $\concPart{r}$.

\begin{lemma}
\label{le:c-d-c=d-c}
For a behavior $r$ and symbol $x$,  $\concPart{\deriv{\concPart{r}}{x}} \syneq \deriv{\concPart{r}}{x}$, syntactically.
\end{lemma}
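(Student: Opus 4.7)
The plan is to prove the lemma by structural induction on $r$, leveraging Lemma~\ref{le:basic-seq-conc-props}(1) which asserts $\concPart{\concPart r} \syneq \concPart r$ syntactically. The intuition is that once we pass to $\concPart r$, every primitive symbol of $r$ has either been replaced by $\phi$ (for bare occurrences) or is safely buried under a $\forkEff{\cdot}$; consequently, any derivative only operates through alternation, concatenation, star, and the $\forkEff{\cdot}$ rule, none of which can expose a bare symbol at the top level. Thus the result of $\deriv{\concPart r}{x}$ is again built exclusively from $\phi$, $\varepsilon$, $\forkEff{\cdot}$, and the three regular combinators, so taking $\concPart{\cdot}$ again changes nothing syntactically.

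First I would dispatch the base cases $r\in\{\phi,\varepsilon,y\}$, where $\concPart r$ is one of $\phi,\varepsilon,\phi$ and the derivative by $x$ collapses to $\phi$, which is a fixed point of $\concPart{\cdot}$. The case $r=\forkEff{r_1}$ is also immediate: $\concPart{\forkEff{r_1}} = \forkEff{r_1}$ and $\deriv{\forkEff{r_1}}{x} = \forkEff{\deriv{r_1}{x}}$, which is syntactically fixed by $\concPart{\cdot}$ by the clause $\concPart{\forkEff s}=\forkEff s$. The sum case $r = r_1+r_2$ is a direct application of the IH, since both $\concPart{\cdot}$ and $\deriv{\cdot}{x}$ distribute over $+$.

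The concatenation case $r = r_1\conc r_2$ is the main obstacle and deserves the most care. Here $\concPart r = \concPart{r_1}\conc\concPart{r_2}$, and expanding the derivative yields
\[
   \deriv{\concPart r}{x} \syneq \deriv{\concPart{r_1}}{x}\conc\concPart{r_2} \;+\; \concPart{\concPart{r_1}}\conc\deriv{\concPart{r_2}}{x}.
\]
Applying $\concPart{\cdot}$ to this sum distributes over $+$ and $\conc$ and yields four factors of the forms $\concPart{\deriv{\concPart{r_i}}{x}}$ and $\concPart{\concPart{r_i}}$. By the IH applied at $r_1$ and $r_2$, the first kind of factor collapses syntactically to $\deriv{\concPart{r_i}}{x}$; by Lemma~\ref{le:basic-seq-conc-props}(1) the second kind collapses syntactically to $\concPart{r_i}$. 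Reassembling gives back exactly the expansion of $\deriv{\concPart r}{x}$ above, establishing syntactic equality. The star case $r = r_1^*$ is analogous but shorter: $\concPart{r_1^*} = \concPart{r_1}^*$, its derivative is $\deriv{\concPart{r_1}}{x}\conc \concPart{r_1}^*$, and the same combination of IH and Lemma~\ref{le:basic-seq-conc-props}(1) closes the case.

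The only subtlety to watch is that all equalities must be kept syntactic (no use of $\eqsim$ or $\semeq$), so one must resist the temptation to simplify, e.g., $\concPart{\varepsilon\conc s}$ to $\concPart s$; the calculations above never do so. With that discipline, the induction goes through using only the definitions of $\concPart{\cdot}$ and $\deriv{\cdot}{x}$ together with Lemma~\ref{le:basic-seq-conc-props}(1).
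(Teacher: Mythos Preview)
Your proposal is correct and follows essentially the same route as the paper's proof: a structural induction on $r$ that uses Lemma~\ref{le:basic-seq-conc-props}(1) to collapse nested applications of $\concPart{\cdot}$ in the concatenation and star cases, with the fork case handled directly by $\concPart{\forkEff s}=\forkEff s$. Your base-case discussion is in fact slightly more careful than the paper's (which glosses the $\varepsilon$ case), but otherwise the arguments coincide.
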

The above result makes it easier to classify
the forms of dissimilar descendants.

The Kleene star case is clearly highly relevant.
The following statement confirms the observation
in Example~\ref{ex:infinite-desc}.

\begin{lemma}
\label{le:derivatives-of-r*}
  For $w \in \Sigma^+$, 
  $
  \deriv{r^*}{w} \eqsim \deriv{r}{w} \conc r^* + t
  $
  where $t$ is a possibly empty sum of terms of the form 
  \quad$
  s_1 \conc \dots \conc s_n \conc r' \conc r^*
  $\quad
  where $r' \TrueDescendant r$, $n\ge1$,
  and for each $s_i$, $s_i \Descendant \concPart{s}$ for some descendant $s \TrueDescendant r$.
\end{lemma}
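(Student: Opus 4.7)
The plan is induction on $|w|$ for $w \in \Sigma^+$. For the base case $|w|=1$, say $w = x$, the definition gives $\deriv{r^*}{x} = \deriv{r}{x} \conc r^*$, which matches the claimed shape with the tail sum $t$ empty (i.e., $\phi$, absorbed by the unit law).

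For the inductive step, write $w = uy$ with $u \in \Sigma^+$. The induction hypothesis yields $\deriv{r^*}{u} \eqsim \deriv{r}{u} \conc r^* + t_u$, with $t_u$ a sum of terms of the required shape. Differentiating both sides by $y$ and using Lemma~\ref{le:sim-deriv-conc}(1), I unfold $\deriv{\deriv{r}{u} \conc r^*}{y}$ via the concatenation rule into $\deriv{r}{uy} \conc r^* + \concPart{\deriv{r}{u}} \conc \deriv{r}{y} \conc r^*$. The first summand is the new head; the second has the required shape with $n=1$, $s_1 = \concPart{\deriv{r}{u}}$ (so $s_1 \Descendant \concPart{\deriv{r}{u}}$ with $\deriv{r}{u} \TrueDescendant r$ because $|u|\ge 1$), and inner $\deriv{r}{y} \TrueDescendant r$.

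The remaining work is to show $\deriv{t_u}{y}$ is itself of the required shape. For a generic summand $s_1 \conc \dots \conc s_n \conc r' \conc r^*$ of $t_u$, iterating the concatenation rule expands its derivative into a positional sum: for $i\le n$, $\concPart{s_1} \conc \dots \conc \concPart{s_{i-1}} \conc \deriv{s_i}{y} \conc s_{i+1} \conc \dots \conc s_n \conc r' \conc r^*$; at the $r'$-slot, $\concPart{s_1} \conc \dots \conc \concPart{s_n} \conc \deriv{r'}{y} \conc r^*$; and at the $r^*$-slot, $\concPart{s_1} \conc \dots \conc \concPart{s_n} \conc \concPart{r'} \conc \deriv{r}{y} \conc r^*$. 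In each case the freshly differentiated factor remains in the required family: descendants of $\concPart{s}$ are closed under further derivatives, and both $\deriv{r'}{y}$ and $\deriv{r}{y}$ are true descendants of $r$. The new $\concPart{r'}$ in the last term is $\concPart{s''}$ for $s''=r' \TrueDescendant r$.

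The only subtle point is the treatment of the $\concPart{s_j}$ prefixes: the invariant only tells us $s_j \Descendant \concPart{s^{(j)}}$ for some $s^{(j)} \TrueDescendant r$, but the target shape requires the prefix factors themselves to lie in this family. Here Lemma~\ref{le:c-d-c=d-c} is crucial: a trivial induction on word length generalizes it to $\concPart{\deriv{\concPart{r}}{v}} \syneq \deriv{\concPart{r}}{v}$ for every $v \in \Sigma^*$, whence $\concPart{s_j} \syneq s_j$ and the prefixes are syntactically unchanged. Flattening the assembled sum using the similarity axioms of Figure~\ref{fig:similarity} (associativity, idempotence, and the $\phi$-unit laws) then yields the claim. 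The main obstacle is not any single step but this bookkeeping: keeping the positional expansion of the product derivative synchronized with the ``prefixes are $\concPart{\cdot}$-descendants of true descendants of $r$'' invariant across the induction.
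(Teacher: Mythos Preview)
Your proposal is correct and follows essentially the same route as the paper: induction on the length of $w$, unfolding the derivative of the head term $\deriv{r}{u}\conc r^*$ via the concatenation rule, and then arguing that the derivative of each tail summand $s_1\conc\dots\conc s_n\conc r'\conc r^*$ retains the required shape. The only presentational difference is that the paper packages this last argument as an explicit auxiliary induction on the prefix length $n$ (peeling off $s_1$ and recursing), whereas you unroll the product rule directly into the positional sum; both versions rely on Lemma~\ref{le:c-d-c=d-c} (and its straightforward extension to words) to show that the prefixes $\concPart{s_j}$ remain descendants of some $\concPart{s}$ with $s\TrueDescendant r$.
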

\begin{proof}
  Induction on $w$.

  \textbf{Case }$x$: $\deriv{r^*}{x} = \deriv{r}{x} \conc r^* \eqsim \deriv{r}{x} \conc r^* + \phi$.

  \textbf{Case }$wx$: $\deriv{r^*}{wx} = \deriv{\deriv{r^*}{w}}{x}$. By induction for $w$, 
  $\deriv{r^*}{w} \eqsim \deriv{r}{w} \conc r^* + t$ where each summand of $t$ has the form $s_1 \conc
  \dots \conc s_n \conc r'  \conc r^*$. First,  observe that $\deriv{\deriv{r}{w} \conc r^* + t}{x}
  = \deriv{r}{wx} \conc r^* + \concPart{\deriv{r}{w}} \conc \deriv{r}{x} \conc r^* + \deriv{t}{x}$.
  We show by auxiliary
  induction on $n$ that the derivative of $t$ is a sum of terms of the desired form.

  \textbf{Case }$0$:~%
  \begin{minipage}[t]{0.7\linewidth}
    \vspace{-1.8\baselineskip}
    \begin{align*}
      \deriv{r' \conc r^*}{x} 
      &= \deriv{r'}{x}\conc r^* + \concPart{r'} \conc \deriv{r^*}{x} \\
      &= \deriv{r'}{x}\conc r^* + \concPart{r'} \conc \deriv{r}{x} \conc {r^*} 
    \end{align*}
    which has the desired format.
  \end{minipage}
  \medskip{}

  \textbf{Case }$n>0$:\quad%
  \begin{minipage}[t]{0.7\linewidth}
    \vspace{-1.8\baselineskip}
    \begin{align*}
      & \deriv{s_1 \conc \dots \conc s_n \conc r' \conc r^*}{x} \\
      &= \deriv{s_1}{x} \conc s_2 \conc \dots \conc s_n \conc r' \conc r^*
        + \concPart{s_1} \conc \deriv{s_2 \conc \dots \conc s_n \conc r' \conc r^*}{x}
    \end{align*}
  \end{minipage}

  The first summand has the desired form. By induction (on $n$), each summand of $\deriv{s_2 \conc
    \dots \conc s_n \conc r' \conc r^*}{x}$ has the desired form and multiplying with
  $\concPart{s_1}$ from the left retains this form: By Lemma~\ref{le:c-d-c=d-c},
  $\concPart{s_1}$ is still a descendant of $\concPart{s}$ for some descendant $s$ of $r$.
\qed
\end{proof}

To obtain finiteness it appears that a sufficient condition
is to ensure that the subterms $s_i$ are trivial (either $\varepsilon$
or $\phi$). Via similarity, the explosion of terms derived from
Kleene star can then be avoided. To verify this claim
we also characterize the descendants
of concatenated behaviors.

\begin{lemma}
\label{le:form-derivative-concatenation}
  For $w\in\Sigma^+$, $\deriv{r\conc s}{w}$ has the form
  \begin{align*}
    \deriv{r\conc s}{w} &\eqsim \deriv{r}{w} \conc s + \concPart{r} \conc \deriv{s}{w} + t
  \end{align*}
  where $t$ is a sum of terms of the form $r' \conc s'$ where $s'$ is a descendant of $s$ and $r'$ is a descendant of
  $\concPart{r''}$ and $r''$ is a descendant of $r$.
\end{lemma}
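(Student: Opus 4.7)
The plan is to proceed by induction on $w \in \Sigma^+$. For the base case $w = x$, the derivative rule for concatenation gives $\deriv{r \conc s}{x} = \deriv{r}{x} \conc s + \concPart{r} \conc \deriv{s}{x}$ directly, so the claim holds with the empty sum (i.e., $t \eqsim \phi$).

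For the inductive step $w = w' x$, I start from the induction hypothesis $\deriv{r \conc s}{w'} \eqsim \deriv{r}{w'} \conc s + \concPart{r} \conc \deriv{s}{w'} + t'$, invoke Lemma~\ref{le:sim-deriv-conc}(1) to commute $\deriv{\cdot}{x}$ with $\eqsim$, and differentiate each of the three summands. Differentiating $\deriv{r}{w'} \conc s$ yields $\deriv{r}{w} \conc s$ plus a new summand $\concPart{\deriv{r}{w'}} \conc \deriv{s}{x}$ of the required form (take $r'' = \deriv{r}{w'}$, so that $r' = \concPart{r''}$, and $s' = \deriv{s}{x}$). Differentiating $\concPart{r} \conc \deriv{s}{w'}$ yields $\deriv{\concPart{r}}{x} \conc \deriv{s}{w'} + \concPart{\concPart{r}} \conc \deriv{s}{w}$; by Lemma~\ref{le:basic-seq-conc-props}(1) the second term is $\concPart{r} \conc \deriv{s}{w}$, which is the main second term of the statement, while the first term is again of the required form (take $r'' = r$, so that $r' = \deriv{\concPart{r}}{x}$ is a direct descendant of $\concPart{r''}$).

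The remaining obligation is to show that $\deriv{t'}{x}$ is still a sum of terms of the required shape. A typical summand of $t'$ has the form $r' \conc s'$ with $r' \Descendant \concPart{r''}$, $r'' \Descendant r$, and $s' \Descendant s$. Computing $\deriv{r' \conc s'}{x} = \deriv{r'}{x} \conc s' + \concPart{r'} \conc \deriv{s'}{x}$, the first term is clearly of the required form by transitivity of $\Descendant$. For the second term, the key auxiliary fact is that $\concPart{r'} \syneq r'$ whenever $r' \Descendant \concPart{r''}$; this follows by a short induction on the length of the descendant chain, with base case Lemma~\ref{le:basic-seq-conc-props}(1) and inductive step Lemma~\ref{le:c-d-c=d-c}. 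Hence $\concPart{r'} \conc \deriv{s'}{x} \syneq r' \conc \deriv{s'}{x}$, which is again of the required form.

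The main obstacle I anticipate is precisely this auxiliary fact that descendants of a concurrent part are syntactically their own concurrent part. Without it, repeated differentiation would force us to keep wrapping with fresh applications of $\concPart{}$, and the classification of $t$-summands would fail to stabilize; with it, Lemma~\ref{le:c-d-c=d-c} supplies exactly the closure property needed to keep the shape invariant across the induction.
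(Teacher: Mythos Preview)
Your proposal is correct and follows essentially the same approach as the paper: induction on $w$, differentiating each summand in the inductive step, and handling the $t$-summands via Lemmas~\ref{le:basic-seq-conc-props}(1) and~\ref{le:c-d-c=d-c}. You are more explicit than the paper in isolating and proving the auxiliary invariant $\concPart{r'} \syneq r'$ for $r' \Descendant \concPart{r''}$, which the paper leaves implicit by merely citing those two lemmas, but the underlying argument is the same.
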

\begin{proof}
  By induction on $w$.

  \textbf{Case }$x$: Immediate from the definition of $\deriv{r\conc s}{x}$ with $t=\phi$.

  \textbf{Case }$wx$: By induction
  \begin{align*}
    \deriv{\deriv{r\conc s}{w}}{x}
    &\eqsim \deriv{\deriv{r}{w} \conc s + \concPart{r} \conc \deriv{s}{w} + t}{x} \\
    &\eqsim \underline{\deriv{r}{wx} \conc s} +
    \concPart{\deriv{r}{w}} \conc \deriv{s}{x}
    \\&
    + \deriv{\concPart{r}}{x} \conc \deriv{s}{w} +
    \underline{\concPart{r} \conc \deriv{s}{wx}}
    \\&
    + \deriv{t}{x}
  \end{align*}
  The underlined summands have the expected forms. The newly created summands have a form
  corresponding to $r'\conc s'$. It remains to observe that the derivative of a summand in $t$ has the
  expected form by Lemma~\ref{le:c-d-c=d-c} and Lemma~\ref{le:basic-seq-conc-props}.
  \begin{align*}
    \deriv{r' \conc s'}{x } &= \deriv{r'}{x} \conc s' + \concPart{r'} \conc \deriv{s'}{x}
  \end{align*}
  \vspace{-2.8\baselineskip}

  ~\qed 
\end{proof}

\begin{definition}[Well-behaved Behaviors]
  A behavior $t$ is \emph{well-behaved}
  if all subterms of the form $r^*$ have the
  property that $\concPart{\deriv{r}{w}} \semle \varepsilon$, for all $w \in \Sigma^*$.
\end{definition}

The intuition for this definition is simple: Well-behaved behaviors do not fork processes with
non-trivial communication behavior in a loop (i.e., under a star).
Indeed, we have a simple decidable sufficient condition for well-behavedness.
\begin{lemma}
  If $r \eqsim r'$ and $r'$ is fork-free, then $\concPart{\deriv{r}{w}} \semle \varepsilon$, for all $w
  \in \Sigma^*$.
\end{lemma}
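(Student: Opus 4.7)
The plan is to reduce the problem to a statement about fork-free behaviors and then establish that statement by structural induction on an expression built from $\phi$, $\varepsilon$, $+$, $\conc$, and $^*$ only.

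First I would use the hypothesis $r \eqsim r'$ to transport the problem across similarity. Iterating Lemma~\ref{le:sim-deriv-conc}(1) gives $\deriv{r}{w} \eqsim \deriv{r'}{w}$ for every $w \in \Sigma^*$, and then Lemma~\ref{le:sim-deriv-conc}(2) yields $\concPart{\deriv{r}{w}} \eqsim \concPart{\deriv{r'}{w}}$. Lemma~\ref{le:similarity-is-correct} upgrades this to the semantic equivalence $\concPart{\deriv{r}{w}} \semeq \concPart{\deriv{r'}{w}}$, reducing the claim to proving $\concPart{\deriv{r'}{w}} \semle \varepsilon$ under the assumption that $r'$ is fork-free.

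Next I would observe that fork-freeness is preserved by the derivative operation. A straightforward induction on the length of $w$ suffices: the only derivative rule that produces a $\forkEff{}$ on its right-hand side, namely $\deriv{\forkEff{u}}{x} = \forkEff{\deriv{u}{x}}$, requires a $\forkEff{}$ already in its input. Thus $\deriv{r'}{w}$ is fork-free whenever $r'$ is. Inspecting the rules for $\concPart{\cdot}$ then shows that no clause introduces primitive events or fork subterms when starting from a fork-free expression: in particular $\concPart{x} = \phi$ discards primitives and $\forkEff{}$ does not arise. Hence $\concPart{\deriv{r'}{w}}$ is syntactically a \emph{pure} expression, built solely from $\phi$, $\varepsilon$, $+$, $\conc$, and $^*$.

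The main work is to prove that every pure expression $t$ satisfies $L(t, K) \subseteq K$ for all $K$, which by definition of $\semle$ and the clause $L(\varepsilon, K) = K$ is exactly $t \semle \varepsilon$. I would proceed by structural induction on $t$: the constants are immediate, the case $t = u + v$ is trivial from the IH, and the case $t = u \conc v$ follows from the IH together with monotonicity of $K \mapsto L(u, K)$, giving $L(u \conc v, K) = L(u, L(v, K)) \subseteq L(u, K) \subseteq K$. The delicate case is $t = u^*$, where I would argue that $K$ is a pre-fixed point of the monotone map $X \mapsto L(u, X) \cup K$: by IH, $L(u, K) \subseteq K$, so $L(u, K) \cup K \subseteq K$, and therefore the least fixed point $L(u^*, K)$ lies in $K$. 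The main obstacle is precisely this $^*$ case, which is the only point where a real argument beyond unfolding definitions is needed, and where minimality of the fixed point defining the semantics of Kleene star is essential.
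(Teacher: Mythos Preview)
Your argument is correct, and the paper does not actually supply a proof of this lemma, so there is nothing to compare against; your route is a natural one. One small expositional point: when you argue that the derivative preserves fork-freeness, the concatenation rule
\[
  \deriv{r\conc s}{x} \;=\; \deriv{r}{x}\conc s \;+\; \concPart{r}\conc \deriv{s}{x}
\]
introduces $\concPart{r}$ on the right-hand side, so the claim that ``the only derivative rule that produces a $\forkEff{}$ on its right-hand side is the $\forkEff{}$ rule'' also relies on the (easy) fact that $\concPart{}$ applied to a fork-free expression is again fork-free. You establish exactly this fact in the following paragraph (indeed the stronger fact that $\concPart{}$ of a fork-free expression is pure), so the argument is complete; it just deserves a cross-reference at this point. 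As an alternative to your direct fixed-point argument for pure $t$, one could also invoke Theorem~\ref{theorem:fork-free-regular} to obtain $L(t,K)=\Sem{t}\conc K$ and then observe that a letter-free regular expression satisfies $\Sem{t}\subseteq\{\varepsilon\}$; your self-contained induction is equally valid and arguably cleaner.
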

Thus, a behavior is also well-behaved if, for all subterms of the form $r^*$, $r$ is similar to a
fork-free expression.

Recall that $\deriv{r}{}$ is the set of all descendants of $r$ and $\deriv{r}{} / (\eqsim)$
denotes the set of equivalence classes of descendants of $r$. If we pick a representative 
from each of these equivalence classes, we obtain the dissimilar descendants of $r$.
In a practical implementation, we may want to compute the 
\emph{canonical} representative of each equivalence class.
See the online version for further details. 
For the purpose of this paper, an \emph{arbitrary} representative
is sufficient.

\begin{definition}[Dissimilar Descendants]
  We define the set of \emph{dissimilar descendants of $r$},  $\deriv{r}{\eqsim}$, as a complete set of
  arbitrarily chosen representative behaviors for the equivalence classes $\deriv{r}{} / (\eqsim)$. 
\end{definition}

We extend the function $\concPart{}$ on behaviors pointwise to sets of behaviors and
relation $\eqsim$ to sets of behaviors by
\begin{displaymath}
  R \eqsim S \text{ iff }(\forall r \in R. \exists s \in S. r \eqsim s) \wedge (\forall s \in S. \exists r \in R. r \eqsim s)
\end{displaymath}

\begin{lemma}
\label{le:drop-conc-deriv-conc-deriv}
For any behavior $r$, 
$\concPart{\deriv{\concPart{\deriv{r}{\eqsim}}}{\eqsim}} \eqsim \deriv{\concPart{\deriv{r}{\eqsim}}}{\eqsim}$.
\end{lemma}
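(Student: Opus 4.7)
The plan is to reduce the set-level equivalence to a pointwise one. If every representative $u$ in $T := \deriv{\concPart{\deriv{r}{\eqsim}}}{\eqsim}$ satisfies $\concPart{u} \eqsim u$, then the claim is immediate from the definition of $\eqsim$ on sets: for each $u\in T$, the element $\concPart{u}\in\concPart{T}$ is similar to $u$, and conversely every element of $\concPart{T}$ has the shape $\concPart{u}$ for some $u\in T$ and is therefore similar to $u$. Both directions of the set equivalence thus collapse to the single pointwise statement $\concPart{u}\eqsim u$.

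The pointwise statement rests on a word-level strengthening of Lemma~\ref{le:c-d-c=d-c}. I would first prove, by induction on $w\in\Sigma^*$, that $\concPart{\deriv{\concPart{r'}}{w}} \syneq \deriv{\concPart{r'}}{w}$ holds syntactically, for every behavior $r'$ and every word $w$. The base case $w=\varepsilon$ is exactly Lemma~\ref{le:basic-seq-conc-props}(1). For the step $w = w' x$, the induction hypothesis gives $\deriv{\concPart{r'}}{w'} \syneq \concPart{\deriv{\concPart{r'}}{w'}}$, so the left-hand side is itself syntactically of the form $\concPart{(\,\cdot\,)}$; applying Lemma~\ref{le:c-d-c=d-c} with this expression in the role of $\concPart{r}$ then yields the desired syntactic identity at $w'x$.

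Unfolding the definition of $\deriv{\cdot}{\eqsim}$ on a set, every $u\in T$ is similar to some expression of the form $u^{\circ} := \deriv{\concPart{s'}}{w}$ with $s'\in\deriv{r}{\eqsim}$ and $w\in\Sigma^*$. The strengthened lemma gives $\concPart{u^\circ} \syneq u^\circ$, and Lemma~\ref{le:sim-deriv-conc}(2) (compatibility of $\concPart{}$ with $\eqsim$) then chains into $\concPart{u} \eqsim \concPart{u^\circ} \syneq u^\circ \eqsim u$, which is the pointwise claim. The main potential obstacle is purely notational: one has to read $\deriv{\cdot}{\eqsim}$ on a set as picking representatives of the equivalence classes of the union of all descendants of elements of the set, and keep track of which of the several $\eqsim$/$\syneq$ steps are at the level of behaviors and which are at the level of sets. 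Once that is sorted out, the proof reduces to the single induction above plus a routine chase through the previously established lemmas.
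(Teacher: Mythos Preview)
Your proposal is correct and follows essentially the same approach as the paper, which simply says the result ``follows from Lemma~\ref{le:c-d-c=d-c} and Lemma~\ref{le:sim-deriv-conc}.'' You have spelled out exactly the details the paper leaves implicit: the word-level extension of Lemma~\ref{le:c-d-c=d-c} by induction (with Lemma~\ref{le:basic-seq-conc-props}(1) as the base case), the use of Lemma~\ref{le:sim-deriv-conc}(2) to transport $\concPart{}$ across $\eqsim$, and the reduction of the set-level $\eqsim$ to the pointwise claim $\concPart{u}\eqsim u$.
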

\begin{proof}
Follows from Lemma~\ref{le:c-d-c=d-c} and Lemma~\ref{le:sim-deriv-conc}.
\qed
\end{proof}

\begin{lemma}
\label{le:deriv-rep-eqsim}
For any behavior $r$, $\deriv{r}{} \eqsim \deriv{r}{\eqsim}$.
\end{lemma}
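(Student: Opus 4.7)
The plan is to unfold the definitions and verify each of the two set-inclusion-style conditions of the lifted relation $\eqsim$ directly. Recall that
\[
  R \eqsim S \text{ iff }(\forall r \in R. \exists s \in S.\; r \eqsim s) \wedge (\forall s \in S. \exists r \in R.\; r \eqsim s),
\]
so there is nothing to prove beyond matching up elements on each side. The key observation is that $\deriv{r}{\eqsim}$ is defined to be a complete system of representatives for the equivalence classes $\deriv{r}{}/(\eqsim)$, i.e., exactly one element per class.

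\medskip\noindent\textbf{First direction: $\forall t \in \deriv{r}{}. \exists s \in \deriv{r}{\eqsim}. t \eqsim s$.} Let $t$ be any descendant of $r$. Then the class $[t]$ belongs to $\deriv{r}{}/(\eqsim)$ by construction of the quotient. Since $\deriv{r}{\eqsim}$ is a complete set of representatives, it contains some $s$ with $[s] = [t]$, and hence $s \eqsim t$ by definition of the equivalence class. This element $s$ witnesses the required existential.

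\medskip\noindent\textbf{Second direction: $\forall s \in \deriv{r}{\eqsim}. \exists t \in \deriv{r}{}. s \eqsim t$.} Let $s \in \deriv{r}{\eqsim}$. By definition $s$ is chosen as the representative of some equivalence class $[t] \in \deriv{r}{}/(\eqsim)$, where $t \in \deriv{r}{}$. Hence $s \eqsim t$ with $t$ a descendant of $r$, as required. (By reflexivity of $\eqsim$ one may also simply take $t = s$ whenever the chosen representative itself lies in $\deriv{r}{}$, which is the natural reading of ``representative''.)

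\medskip\noindent I do not anticipate any real obstacle: the lemma is essentially a sanity check confirming that picking representatives from each similarity class preserves the set up to $\eqsim$. No inductive structure on $r$ or on derivations is needed, no earlier lemma is invoked, and no algebraic reasoning about $\concPart{\cdot}$ or $\seqPart{\cdot}$ enters. The only thing to be careful about is to read the quotient $\deriv{r}{}/(\eqsim)$ as an honest quotient so that every equivalence class is represented exactly once in $\deriv{r}{\eqsim}$, which is guaranteed by the word ``complete'' in the definition.
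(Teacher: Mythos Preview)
Your argument is correct: once $\deriv{r}{\eqsim}$ is defined as a complete system of representatives for $\deriv{r}{}/(\eqsim)$, the lemma is a tautology, and your two-direction verification is exactly the right way to spell that out.

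It is worth noting that the paper takes a different route. Instead of appealing directly to the definition of a system of representatives, the paper proves the first direction by induction on the length $|w|$ of the word used to obtain the descendant $t_1 = \deriv{r}{w}$, invoking Lemma~\ref{le:sim-deriv-conc} (compatibility of $\eqsim$ with the derivative) in the inductive step to transport similarity along one more symbol. That argument is sound but heavier than necessary given the static definition of $\deriv{r}{\eqsim}$; it would be the natural proof if $\deriv{r}{\eqsim}$ were instead \emph{constructed} operationally by iterating derivatives on representatives until a fixpoint. Your proof is more elementary and avoids the auxiliary lemma entirely, at the cost of relying squarely on the ``complete set of representatives'' reading of the definition---which, as you note, is precisely what the definition says.
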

\begin{proof}
We need to verify that 
for each $t_1 \in \deriv{r}{}$ there  exists $t_2 \in \deriv{r}{\eqsim}$
such that $t_1 \eqsim t_2$. 
We prove this property
by induction on the number of derivative steps.

\textbf{Case $w=\varepsilon$:} Then, $t_1 = r$. Clearly, there exists
$t_2 \in \deriv{r}{\eqsim}$ such that $t_1 \eqsim t_2$.

\textbf{Case $w = x \conc w'$:} Then, $t_1 = \deriv{\deriv{r}{w'}}{x}$.
By the IH, $\deriv{r}{w'} \eqsim t_2$ where $t_2 \in \deriv{r}{\eqsim}$.
By Lemma~\ref{le:sim-deriv-conc}, 
  $t_1 \equiv \deriv{t_2}{x}$ where $\deriv{t_2}{x} \eqsim t_3$
for some $t_3 \in \deriv{r}{\eqsim}$. Thus, we are done.
\qed
\end{proof}

The next result can be verified via similar reasoning.

\begin{lemma}
\label{le:deriv-finite-fp}
For any behavior $r$, $\deriv{\deriv{r}{\eqsim}}{} \eqsim \deriv{r}{\eqsim}$.
\end{lemma}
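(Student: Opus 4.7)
The plan is to prove the two set-inclusions (modulo $\eqsim$) separately, piggybacking on Lemma~\ref{le:deriv-rep-eqsim} and on the fact that similarity is preserved by derivatives (Lemma~\ref{le:sim-deriv-conc}). Concretely, I want to verify (a) for every $t \in \deriv{\deriv{r}{\eqsim}}{}$ there exists $t' \in \deriv{r}{\eqsim}$ with $t \eqsim t'$, and (b) for every $u \in \deriv{r}{\eqsim}$ there exists $u' \in \deriv{\deriv{r}{\eqsim}}{}$ with $u \eqsim u'$.

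Direction (b) is immediate: every $u \in \deriv{r}{\eqsim}$ is a descendant of itself ($u \Descendant u$), hence $u \in \deriv{u}{} \subseteq \deriv{\deriv{r}{\eqsim}}{}$, so we may take $u' = u$.

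For direction (a), suppose $t \in \deriv{\deriv{r}{\eqsim}}{}$. Then $t = \deriv{s}{w}$ for some word $w \in \Sigma^*$ and some representative $s \in \deriv{r}{\eqsim}$. By definition of $\deriv{r}{\eqsim}$, the class $[s]$ contains some genuine descendant of $r$, so there exist a word $w'$ and a behavior $s' = \deriv{r}{w'}$ with $s \eqsim s'$. Iterating Lemma~\ref{le:sim-deriv-conc} along the symbols of $w$ yields $\deriv{s}{w} \eqsim \deriv{s'}{w} = \deriv{r}{w'w}$. Hence $t$ is similar to an element of $\deriv{r}{}$, and by Lemma~\ref{le:deriv-rep-eqsim} this element is in turn similar to some $t' \in \deriv{r}{\eqsim}$, giving $t \eqsim t'$ by transitivity (\rlabel{Trans}).

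The only mildly delicate step is extending Lemma~\ref{le:sim-deriv-conc} from single symbols to arbitrary words; this is a routine induction on $|w|$ using compatibility and transitivity of $\eqsim$. There is no real obstacle: the essence of the lemma is that taking derivatives of representatives does not escape the equivalence classes already collected in $\deriv{r}{\eqsim}$, which is exactly what Lemma~\ref{le:deriv-rep-eqsim} and Lemma~\ref{le:sim-deriv-conc} together guarantee.
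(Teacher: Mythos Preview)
Your proposal is correct and follows essentially the same approach the paper intends: the paper merely states that the result ``can be verified via similar reasoning'' as Lemma~\ref{le:deriv-rep-eqsim}, and your argument is precisely that reasoning spelled out---using Lemma~\ref{le:sim-deriv-conc} iterated along a word to transport a descendant of a representative back to a genuine descendant of $r$, and then invoking Lemma~\ref{le:deriv-rep-eqsim} to land in $\deriv{r}{\eqsim}$. The only cosmetic difference is that the paper's proof of Lemma~\ref{le:deriv-rep-eqsim} phrases the iteration as an induction on the number of derivative steps, whereas you phrase it as iterating Lemma~\ref{le:sim-deriv-conc} over the symbols of $w$; these are the same argument.
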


\begin{theorem}[Finiteness of Well-Behaved Dissimilar Descendants]~\\
  \label{th:finiteness-well-behaved}
 Let $t$ be a well-behaved behavior.
 Then, $\Card{\deriv{t}{\eqsim}} < \infty$.
\end{theorem}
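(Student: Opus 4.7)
The plan is to prove the theorem by induction on the structure of $t$, using Lemmas~\ref{le:form-derivative-concatenation} and~\ref{le:derivatives-of-r*} to express an arbitrary $\deriv{t}{w}$ as a similarity-sum of summands drawn from a reservoir that the induction hypothesis certifies to be finite modulo $\eqsim$. Since $\eqsim$ absorbs \rlabel{Assoc}, \rlabel{Comm}, and \rlabel{Idem}, a finite reservoir of summands yields finitely many sums. The base cases $t \in \{\phi, \varepsilon, x\}$ produce at most two dissimilar descendants each, and for $t = r + s$ and $t = \forkEff{r}$ the derivative formulas give $\deriv{r+s}{w} = \deriv{r}{w} + \deriv{s}{w}$ and (by a side-induction on $w$) $\deriv{\forkEff{r}}{w} \eqsim \forkEff{\deriv{r}{w}}$, which combined with the IH on the well-behaved subterms closes both cases.

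The critical case is $t = r^*$, and it is where well-behavedness is genuinely used. First, $r$ inherits well-behavedness, since every star subterm of $r$ is a star subterm of $r^*$. By Lemma~\ref{le:derivatives-of-r*}, $\deriv{r^*}{w} \eqsim \deriv{r}{w} \conc r^* + t'$ where $t'$ is a sum of products $s_1 \conc \dots \conc s_n \conc r' \conc r^*$ with $r' \TrueDescendant r$ and each $s_i$ a descendant of $\concPart{s}$ for some $s \TrueDescendant r$. Invoking well-behavedness, $\concPart{s} = \concPart{\deriv{r}{u}} \semle \varepsilon$ for the appropriate $u \in \Sigma^+$. Iterating Theorem~\ref{th:deriv-correctness} along the word $v$ that produces $s_i$ from $\concPart{s}$ yields $L(s_i) = \leftQ{v}{L(\concPart{s})} \subseteq \{\varepsilon\}$, so Lemma~\ref{le:eqsim-eps-phi} gives $s_i \eqsim \varepsilon$ or $s_i \eqsim \phi$. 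The axioms \rlabel{Empty Word} and \rlabel{Empty Language} then collapse every product into $\phi$ or $r' \conc r^*$. Because the IH on $r$ bounds both the $\deriv{r}{w}$ classes and the true-descendant $r'$ classes, the summand pool is finite and so is $\deriv{r^*}{\eqsim}$.

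The concatenation case $t = r \conc s$ is the main obstacle. Lemma~\ref{le:form-derivative-concatenation} writes $\deriv{r\conc s}{w}$ as a similarity-sum of $\deriv{r}{w} \conc s$, $\concPart{r} \conc \deriv{s}{w}$, and extra summands $r' \conc s'$ with $s' \Descendant s$ and $r' \Descendant \concPart{r''}$ for some $r'' \Descendant r$. The IH combined with Lemma~\ref{le:sim-deriv-conc} bounds $[\deriv{r}{w}]$, $[\deriv{s}{w}]$, and $[\concPart{r''}]$, so the remaining task is to bound $\{[\deriv{\concPart{r''}}{v}] : r'' \Descendant r, v \in \Sigma^*\}$. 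I would dispose of this with an auxiliary argument: a syntactic analysis of how derivatives propagate shows that every star subterm of $\concPart{r''}$ has the form $\concPart{q}^*$ for some star subterm $q^*$ of $r$, and well-behavedness of $r$ at $q^*$ applied at $w = \varepsilon$ gives $L(\concPart{q}) \subseteq \{\varepsilon\}$, whence $\concPart{q}^* \eqsim \varepsilon$ by Lemma~\ref{le:eqsim-eps-phi}. Thus $\concPart{r''}$ is similar to a star-free expression whose $\forkEff{}$-leaves hold descendants of the $\forkEff{}$-contents occurring in $r$; an inner induction on this star-free skeleton, combined with the outer IH to control the $\forkEff{}$-leaves, supplies the required finiteness and completes the theorem.
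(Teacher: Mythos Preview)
Your treatment of the base cases, the sum, the fork, and the Kleene star is essentially the paper's: for $r^*$ you invoke Lemma~\ref{le:derivatives-of-r*}, use well-behavedness together with Lemma~\ref{le:eqsim-eps-phi} to collapse every prefix factor $s_i$ to $\varepsilon$ or $\phi$, and conclude that each descendant of $r^*$ is similar to $\phi$ or to $r'\conc r^*$ with $r'$ ranging over the finitely many dissimilar descendants of $r$.

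The gap is in the concatenation case. You correctly isolate the residual task after Lemma~\ref{le:form-derivative-concatenation} and the IH: bound the similarity classes of $\deriv{\concPart{r''}}{v}$ over all $r'' \Descendant r$ and $v\in\Sigma^*$. Your auxiliary argument---replace each star $\concPart{q}^*$ in $\concPart{r''}$ by $\varepsilon$ to obtain a ``star-free skeleton'' $e$, then run an inner structural induction on $e$---does not close the loop. In the concatenation subcase $e = e_1 \conc e_2$ of that inner induction, Lemma~\ref{le:form-derivative-concatenation} again leaves you needing to bound the classes of $\deriv{\concPart{e_1''}}{v}$ for $e_1'' \Descendant e_1$. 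The inner IH only yields finiteness of $\deriv{e_1}{\eqsim}$, not of the iterated object $\deriv{\concPart{\deriv{e_1}{}}}{}$, so you face the same obstruction one level down; and there is no decreasing measure available, since $\concPart{e_1''}$ is in general not structurally smaller than $e$. (A secondary imprecision: your claim that ``every star subterm of $\concPart{r''}$ has the form $\concPart{q}^*$'' holds only for stars not nested inside a $\forkEff{}$, because $\concPart{\forkEff p}=\forkEff p$ leaves $p$ untouched. Your ``forks as leaves'' reading makes this harmless, but it should be stated that way.)

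The paper resolves exactly this circularity by \emph{strengthening the induction hypothesis}: it proves that for every $i\ge 0$ the sets $d_t^i$ are finite, where $d_t^0=\deriv{t}{\eqsim}$ and $d_t^{i+1}=\deriv{\concPart{d_t^i}}{\eqsim}$. Lemmas~\ref{le:drop-conc-deriv-conc-deriv} and~\ref{le:deriv-finite-fp} show this tower stabilizes at $i=1$, so in each case of the structural induction only $d_t^0$ and $d_t^1$ must be checked, and the strengthened IH (finiteness of all $d_r^i$, $d_s^i$) supplies precisely the missing bound on $\deriv{\concPart{\deriv{r}{}}}{}$ in the concatenation case. Your reduction to a star-free skeleton is an attractive idea, but without an analogous strengthening---or a genuine well-founded measure on skeletons that is preserved by the passage $e_1 \mapsto \concPart{e_1''}$---the inner induction does not terminate.
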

\begin{proof}
 We need to generalize the statement to obtain the result:
 If $t$ is well-behaved then
 $\Card d_t^i < \infty$ for all $i \geq 0$
   where $d_t^0 = \deriv{t}{\eqsim}$ and $d_t^{n+1} = \deriv{\concPart{d_t^n}}{\eqsim}$.

 Based on Lemmas~\ref{le:deriv-finite-fp} and \ref{le:drop-conc-deriv-conc-deriv} 
we find that $d_t^{n+1} = d_t^n$ for $n \geq 1$.
That is, in the induction step it is sufficient to establish
that $d_t^0$ and $d_t^1$ are finite.

We proceed by induction on $t$. For brevity,
we only consider the case of concatenation.

\noindent \textbf{Case $r \conc s$:} By the IH, $d_r^i$ and $d_s^i$ are finite for any $i \geq 0$.
We first show that $d_{r \conc s}^0$ is finite.
 \begin{enumerate}
    \item By Lemma~\ref{le:form-derivative-concatenation}, the elements of $\deriv{r \conc s}{}$ are
      drawn from the set
       $$
        \deriv{r}{} \conc s + \concPart{r} \conc \deriv{s}{} + \sum \deriv{\concPart{\deriv{r}{}}}{} \conc \deriv{s}{}
       $$
    \item By Lemma~\ref{le:deriv-rep-eqsim}
          the above is similar to         
       $$
        \deriv{r}{\eqsim} \conc s + \concPart{r} \conc \deriv{s}{\eqsim} + \sum \deriv{\concPart{\deriv{r}{\eqsim}}}{\eqsim} \conc \deriv{s}{\eqsim}
       $$ 
    \item Immediately, we can conclude that $d_{r \conc s}^0$ is finite.
 \end{enumerate}
  Next, we consider $d_{r \conc s}^1$.
 \begin{enumerate}
   \item From above, the (dissimilar) descendants of $r \conc s$ are drawn from
       $$
        \underbrace{\deriv{r}{} \conc s}_{t_1} + 
        \underbrace{\concPart{r} \conc \deriv{s}{}}_{t_2} + 
       \sum \underbrace{\deriv{\concPart{\deriv{r}{}}}{} \conc \deriv{s}{\eqsim}}_{t_3}
       $$ 
      For each $t_i$ we will show that $d_{t_i}^1$ is finite and thus follows
      the desired result.
   \item By Lemma~\ref{le:form-derivative-concatenation}, descendants of $\concPart{t_1}$ are of the form
    $$
   \deriv{\concPart{\deriv{r}{}}}{} \conc \concPart{s}
   + \concPart{\deriv{r}{}} \conc \deriv{\concPart{s}}{}
   + \sum \deriv{\concPart{\deriv{\concPart{\deriv{r}{}}}{}}}{} \conc \deriv{\concPart{s}}{}
    $$
 \item As we know by the IH, $d_r^i$ and $d_s^i$ are finite for any $i \geq 0$.
       Hence, via similar reasoning as above we can conclude that the above
      is similar to expressions of the form
     \bda{c}
         d_r^1 \conc \concPart{s} + \concPart{d_r^0} \conc d_s^1
       + \sum d_r^2 \conc d_s^1
     \eda
      As all sub-components are finite,
      we conclude that $d_{t_1}^1$ is finite.
  \item We consider the descendants of $\concPart{t_2}$ which are of the following form
  \bda{c}
   \deriv{\concPart{r}}{} \conc \concPart{\deriv{s}{}}
   + \concPart{r} \conc \deriv{\concPart{\deriv{s}{}}}{}
   + \sum \deriv{\concPart{\deriv{\concPart{r}}{}}}{} \conc \deriv{\concPart{\deriv{s}{}}}{}
 \eda
   \item The above is similar to
   \bda{c}
       d_r^1 \conc \concPart{d_s^0}
     + \concPart{r} \conc d_r^1
     + \sum d_r^2 \conc d_s^1
   \eda
    Thus, we find that $d_{t_2}^1$ is finite.
   \item Finally, we observe that shape of descendants of $\concPart{t_3}$
    \bda{l}
        \deriv{\concPart{\deriv{\concPart{\deriv{r}{}}}{}}}{} \conc \concPart{\deriv{s}{}}
  + \concPart{\deriv{\concPart{\deriv{r}{}}}{}} \conc \deriv{\concPart{\deriv{s}{}}}{}
 \\  + \sum \deriv{\concPart{\deriv{\concPart{\deriv{\concPart{\deriv{r}{}}}{}}}{}}}{}
         \conc \deriv{\concPart{\deriv{s}{}}}{}
    \eda
    \item The above is similar to
     \bda{c}
          d_r^2 \conc \concPart{d_s^0}
       +  \concPart{d_r^1} \conc d_s^1
       + \sum d_r^3 \conc d_s^1
     \eda
     \item Then, $d_{t_3}^1$ is finite which concludes the proof for this case.
 \end{enumerate}
\qed
\end{proof}

The result no longer holds if we replace
the assumption $\concPart{\deriv{r}{w}} \semle \varepsilon$, for $w \in \Sigma^*$
by a simpler assumption like $\concPart{r} \semle \varepsilon$.
For example, consider the behavior $(x \conc \forkEff{y})^*$
where  $\concPart{x \conc \forkEff{y}} = \phi \semle \varepsilon$.
However, the set of dissimilar descendants of $(x \conc \forkEff{y})^*$ is infinite
as shown by the calculation
\bda{ll}
   & (x \conc \forkEff{y})^*
\\ \stackrel{x}{\rightarrow} & (\varepsilon \conc \forkEff{y}) \conc (x \conc \forkEff{y})^*
\\ \eqsim &             \forkEff{y} \conc (x \conc \forkEff{y})^*
\\ \stackrel{x}{\rightarrow} & \forkEff{\phi} \conc (x \conc \forkEff{y})^*
              + \forkEff{y} \conc \forkEff{y} \conc (x \conc \forkEff{y})^*
\\ \eqsim & \forkEff{y} \conc \forkEff{y} \conc (x \conc \forkEff{y})^*
\\ ...
\eda
The example also shows that the assumption
$\concPart{\deriv{r}{w}} \semle \varepsilon$, for $w \in \Sigma^*$
is necessary and cannot be weakened to words $w$ of a fixed length.

As an example, consider the behavior
$t = (x_1 \conc ... \conc x_n \conc x_{n+1} \forkEff{y})^*$
where for all $w \in \Sigma^*$ with length less or equal $n$
we find that $\concPart{\deriv{x_1 \conc ... \conc x_n \conc x_{n+1} \forkEff{y}}{w}} \semle \varepsilon$.
Via a similar calculation as above, we can show that
the set of dissimilar descendants of $t$ is infinite.

\section{Related Work}
\label{sec:related-work}

Shuffle expressions are regular expressions with operators for shuffle and shuffle closure.
Shaw \cite{DBLP:journals/tse/Shaw78} proposes to describe the behavior of software using flow
expressions, which extend shuffle expressions with further operators.
Gischer \cite{Gischer:1981:SLP:358746.358767} shows that shuffle expression generate
context-sensitive languages and proposes a connection to Petri net languages. 

The latter connection is made precise by
Garg and Ragunath \cite{Garg:1992:CRE:132251.132252}, who study concurrent regular expressions (CRE), which
are shuffle expressions extended with synchronous composition. They 
show that the class of CRE languages is equal to the class of Petri net languages.
The proof requires the presence of synchronous composition. Forkable expressions do not support
synchronous composition, but they are equivalent to \emph{unit expressions}, which are also defined
by Garg and Ragunath and shown to be strictly less powerful than CREs.



Warmuth and Haussler~\cite{DBLP:journals/jcss/WarmuthH84} present more refined complexity results
for the languages generated by shuffle
expressions. Jedrzejowicz~\cite{DBLP:journals/ipl/Jedrzejowicz87} shows that the nesting of iterated
closure operators matters.


\section*{Acknowledgments}

We thank the reviewers for their comments.

\bibliography{main}

\appendix

\newpage

\section{Supplementary Material}

\subsection{Well-definedness of $L$}
\label{sec:well-definedness-l}

For each $r$, the mapping $K \mapsto L (r, K)$ is a monotone mapping
in $\Power (\Sigma^*) \to \Power (\Sigma^*)$. The proof is by induction
on $r$.

\textbf{Case }$\phi$: $K \mapsto L (\phi, K) = \emptyset$ is monotone.

\textbf{Case }$\varepsilon$: $K\mapsto L (\varepsilon, K) = K$ is
monotone.

\textbf{Case }$x$: $K \mapsto L (x, K) = x \conc K$ is monotone.

\textbf{Case }$r+s$: $K \mapsto L (r+s, K) = L (r, K) \cup L (s, K)$
is monotone by induction.

\textbf{Case }$r\conc s$: $K \mapsto L (r\conc s, K) = L (r, L (s,
K))$ is monotone by induction.

\textbf{Case }$r^*$: $K \mapsto L (r^*, K) = \mu X. L (r, X) \cup
K$. By induction, $f = X \mapsto L (r, X)$ is monotone. Hence, $g_K = X
\mapsto L (r, X) \cup K$ is monotone, for all $K$. Furthermore, $f \le
g_K$ and $g_K \le g_L$ in the pointwise ordering of set-valued
functions, whenever $K \subseteq L$. It follows that $\mu g_K \le \mu
g_L$, which proves that $K \mapsto L (r^*,K)$ is monotone.

\textbf{Case }$\forkEff r$: $K \mapsto L (\forkEff r, K) = L (r) \|
K$. Immediate by induction and because $\|$ is monotone.

\subsection{Proof of Theorem~\ref{le:conc-seq-split}}

\begin{proof}
  By induction on $r$.

  \textbf{Case $\phi$, $\varepsilon$, $x$, $\forkEff r$:} immediate.

  \textbf{Case $r+s$:} immediate by induction.

  \textbf{Case $r \conc s$:}~\begin{minipage}[t]{0.7\linewidth}
    \vspace{-1.8\baselineskip}
    \begin{align*}
      \concPart{r \conc s} + \seqPart{r \conc s}
      &= \concPart{r} \conc \concPart{s} + \concPart{ r}
        \conc \seqPart{s} + \seqPart{r} \conc s \\
      &\semeq \concPart{r} \conc (\concPart{s} + \seqPart{s}) +
        \seqPart{r} \conc s \\
      &\stackrel{IH}\semeq \concPart{r} \conc s + \seqPart{r} \conc s \\
      &\semeq (\concPart{r} + \seqPart{r}) \conc s 
      \\&
          \stackrel{IH}\semeq r \conc s \\
    \end{align*}
  \end{minipage}
  
  \textbf{Case $r^*$:}~\begin{minipage}[t]{0.7\linewidth}
    \vspace{-1.8\baselineskip}
    \begin{align*}
      \concPart{r^*} + \seqPart{r^ *}
      &= \concPart{r}^* + \concPart{r}^* \conc
        \seqPart{r} \conc r^*\\
      &\stackrel{IH}\semeq \concPart{r}^* + \concPart{r}^* \conc
        \seqPart{r} \conc (\concPart{r} + \seqPart{r})^*\\
      &\semeq (\concPart{r} + \seqPart{r})^*
      \\ &
           \stackrel{IH}\semeq r^*
    \end{align*}
  \end{minipage}

  \qed
\end{proof}

\subsection{Proof of Theorem~\ref{theorem:fork-free-regular}}

\begin{proof}
  Induction on $r$ where we generalize the statement
  to $L(r, K) = \Sem{r} \conc K$, for all $K\subseteq\Sigma^*$,
  and use Arden's lemma for the case $r^*$.
  The theorem follows by setting $K=\{\varepsilon\}$.
  
  \textbf{Cases }$\phi$, $\varepsilon$, $x$: immediate.

  \textbf{Case }$r+s$: immediate by IH.

  \textbf{Case }$r\conc s$: 
  $
  L (r \conc s, K) = L (r, L (s, K)) = \Sem{r} \conc (\Sem{s} \conc K) = \Sem{r \conc s} \conc K
  $.

  \textbf{Case }$r^*$:
  \bda{ll}
  & L (r^*, K)
  \\ = & \mbox{(by definition)}
  \\   & \mu X. L (r, X) \cup K
  \\ = & \mbox{(by IH, $X$ must be in $\Sigma^*$)}
  \\   & = \mu X. \Sem{r} \conc X \cup K
  \\   & \mbox{(by Arden's lemma)}
  \\   & = \Sem{r^*} \conc K
  \eda
  \qed
\end{proof}

\subsection{Proof of Theorem~\ref{th:behaviors-kleene-algebra}}

\begin{proof}
Following~\cite{Kozen:1990:KAC:645720.663687}
a Kleene algebra satisfies the axioms of an idempotent semiring
and the following axioms:
\begin{enumerate}
  \item $\varepsilon + r \conc r^* \semle r^*$
  \item $\varepsilon + r^* \conc r \semle r^*$
  \item If $r \conc s \semle s$ then $r^* \conc s \semle s$
  \item If $s \conc r \semle s$ then $s \conc r^* \semle s$
\end{enumerate}

It is straightforward to verify that behaviors form on idempotent
semiring.

Axioms (1) and (2) follow by calculation from the semantics.

Consider (3). Suppose $r \conc s \semle s$.
We will show that by induction for any $n$ we have that
$r^n \conc s \semle s$ where
$r^0 = \varepsilon$ and $r^{n+1} = r \conc r^n$.
Case $n=0$ holds immediately.
For case $n+1$ we perform the following calculations.
\bda{ll}
     & r^{n+1} \conc s 
\\ = & r \conc r^n \conc s  
\\   & \mbox{(IH)}
\\ \semle & r \conc s
\\ \semle & s
\eda
Thus, we find that $r^* \conc s \semle s$.
Case (4) can be verified similarly.
\qed
\end{proof}

\subsection{Proof of  Lemma~\ref{le:basic-seq-conc-props}}
\label{sec:proof-lemma-refl}

\begin{proof}
  \begin{enumerate}
  \item Trivial induction.
  \item Induction on $r$. The cases for $\phi$, $\varepsilon$, $l$, $r+s$, $\forkEff r$ are trivial
    or immediate by the inductive hypothesis.
    \begin{align*}
      \concPart{\seqPart{r \conc s}}
      & = \concPart{\seqPart{r} \conc s +\concPart{ r} \conc \seqPart{s}} \\
      & = \concPart{\seqPart{r} \conc s} +\concPart{\concPart{ r} \conc \seqPart{s}} \\
      & = \concPart{\seqPart{r}} \conc \concPart{s} +\concPart{\concPart{ r}} \conc \concPart{\seqPart{s}} \\
      & \semeq \phi \conc \concPart{s} +{\concPart{ r}} \conc \phi \\
      & \semeq \phi
    \end{align*}
    \begin{align*}
      \concPart{\seqPart{r^*}}
      & = \concPart{\concPart r^* \conc \seqPart r \conc r^*} \\
      & = \concPart{\concPart r^*} \conc \concPart{\seqPart r} \conc \concPart{r^*}   \\
      & \semeq \concPart{\concPart r}^* \conc \phi \conc \concPart{r}^* \\
      & \semeq \phi
    \end{align*}
  \item Induction on $r$. The cases for $\phi$, $\varepsilon$, $l$, $r+s$, $\forkEff r$ are trivial or immediate
    by inductive hypothesis.
    \begin{align*}
      \seqPart{\concPart{r \conc s}}
      & = \seqPart{\concPart{r} \conc \concPart{s}} \\
      & = \seqPart{\concPart{r}} \conc \concPart{s} + \concPart{\concPart{r}} \conc
        \seqPart{\concPart{s}} \\
      & \semeq \phi \conc \concPart{s} + {\concPart{r}} \conc \phi \\
      & \semeq \phi
    \end{align*}
    \begin{align*}
      \seqPart{\concPart{r^*}}
      & = \seqPart{\concPart r^*} \\
      & = \concPart{\concPart r}^* \conc \seqPart{\concPart r} \conc \concPart r^*  \\
      & \semeq {\concPart r}^* \conc \phi \conc \concPart r^* \\
      & \semeq \phi
    \end{align*}
  \item Induction on $r$. The cases for $\phi$, $\varepsilon$, $l$, $r+s$, $\forkEff r$ are trivial or immediate
    by inductive hypothesis.
    \begin{align*}
      \seqPart{\seqPart{r \conc s}}
      & = \seqPart{\seqPart{r} \conc s +\concPart{ r} \conc \seqPart{s}} \\
      & =  \seqPart{\seqPart{r} \conc s} +\seqPart{\concPart{ r} \conc \seqPart{s}} \\
      & = \seqPart{\seqPart{r}} \conc s + \concPart{\seqPart{r}} \conc \seqPart{s}
        + \seqPart{\concPart{r}} \conc \seqPart{s} + \concPart{\concPart{r}} \conc
        \seqPart{\seqPart{s}} \\
      & \{\text{by 1., 2., 3., and the inductive hypothesis}\}\\
      & \semeq {\seqPart{r}} \conc s  + {\concPart{r}} \conc {\seqPart{s}} 
    \end{align*}  
    \begin{align*}
      \seqPart{\seqPart{r^*}}
      & = \seqPart{\concPart r^* \conc \seqPart r \conc r^*} \\
      & = \seqPart{\concPart r^*} \conc \seqPart r \conc r^* + \concPart{\concPart r^*} \conc \seqPart{ \seqPart r \conc r^*} \\
      & = \seqPart{\concPart r^*} \conc \seqPart r \conc r^*
        + {\concPart r}^* \conc (\seqPart{\seqPart r} \conc r^* + \concPart{ \seqPart r} \conc
        \seqPart{r^*})  \\
      & = \seqPart{\concPart r^*} \conc \seqPart r \conc r^*
        + {\concPart r}^* \conc ({\seqPart r} \conc r^* + \phi \conc
        \seqPart{r^*})  \\
      & = \seqPart{\concPart r^*} \conc \seqPart r \conc r^*
        + {\concPart r}^* \conc ({\seqPart r} \conc r^*)  \\
      & = \concPart{\concPart r}^* \conc \seqPart{\concPart r} \conc \concPart r^* \conc \seqPart r \conc r^*
        + {\concPart r}^* \conc ({\seqPart r} \conc r^*)  \\
      & \semeq {\concPart r}^* \conc \phi \conc \concPart r^* \conc \seqPart r \conc r^*
        + {\concPart r}^* \conc ({\seqPart r} \conc r^*) \\
      & \semeq {\concPart r}^* \conc ({\seqPart r} \conc r^*)
    \end{align*}
  \end{enumerate}
  \mbox{}
  \qed
\end{proof}

\subsection{Proof of Lemma \ref{le:concpart-nullable}}

\begin{proof}
  If $\varepsilon \semle \concPart{r}$ and by decomposition
  $r \semeq \seqPart{r} + \concPart{r}$ we find that $\varepsilon \in L(r)$.

  The other direction requires induction on $r$.  For brevity,
we only consider some cases.
\bda{ll} & \varepsilon \in L(r \conc s) \\ \implies & \varepsilon \in L(r) \wedge \varepsilon \in
L(s) \\ \stackrel{\mbox{IH}}{\implies} & \varepsilon \semle \concPart{r} \wedge \varepsilon \semle
\concPart{s} \\ \implies & \varepsilon \semle \concPart{r} \conc \concPart{s} = \concPart{r \conc s}
\eda

For Kleene star, by definition $\concPart{r^*} = \concPart{r}^*$ and  $\varepsilon \semle
\concPart{r}^*$ is a derived property in a Kleene algebra. Thus, $\varepsilon \semle \concPart{r^*}$ follows immediately.
\qed
\end{proof}

\subsection{Auxiliary Statements}

\begin{lemma}\label{lemma:concurrent-concatenation}
Let $r, s$ be behaviors.
Then
  $L (\concPart{r}) \| L(\concPart{s}) = L (\concPart{r \conc s})$.
\end{lemma}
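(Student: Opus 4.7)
The plan is to observe first that by the definition in Figure~\ref{fig:seq-conc-part} we have the \emph{syntactic} identity $\concPart{r \conc s} = \concPart{r} \conc \concPart{s}$, so that unfolding the trace semantics of concatenation gives
\[
  L (\concPart{r \conc s}) = L (\concPart{r} \conc \concPart{s}, \{\varepsilon\}) = L (\concPart{r}, L (\concPart{s}, \{\varepsilon\})) = L (\concPart{r}, L (\concPart{s})).
\]
Thus the lemma reduces to showing that $L (\concPart{r}, L (\concPart{s})) = L (\concPart{r}) \| L (\concPart{s})$, that is, that passing a continuation language to the concurrent part is the same as shuffling it in from the right.

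The main step is therefore the auxiliary claim that, for every behavior $t$ lying in the image of $\concPart{\cdot}$ and every $K \subseteq \Sigma^*$, we have $L (t, K) = L (t) \| K$. This is proved by induction on $t$. The key structural observation is that the concurrent part never contains a bare symbol since $\concPart{x} = \phi$; every occurrence of a symbol in $t$ sits inside a $\forkEff{\cdot}$, and fork already puts its contents in shuffle with $K$. The cases $\phi$, $\varepsilon$, $t_1 + t_2$, and $\forkEff{t_1}$ are immediate from the definitions. For $t_1 \conc t_2$ I would chain two applications of the inductive hypothesis, $L (t_1 \conc t_2, K) = L (t_1, L (t_2, K)) = L (t_1) \| (L (t_2) \| K)$, and then invoke associativity and commutativity of the shuffle operator together with the specialization $K = \{\varepsilon\}$ of the IH to recognize $L (t_1) \| L (t_2) = L (t_1 \conc t_2)$.

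Once the auxiliary claim is in hand, the lemma follows by instantiating it with $t = \concPart{r}$ and $K = L (\concPart{s})$, giving $L (\concPart{r}, L (\concPart{s})) = L (\concPart{r}) \| L (\concPart{s})$ as required.

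The main obstacle will be the Kleene star case $t = t_1^*$ of the auxiliary claim. Here one must show
\[
  \mu X.\, L (t_1) \| X \cup K \;=\; \bigl(\mu X.\, L (t_1) \| X \cup \{\varepsilon\}\bigr) \| K,
\]
after rewriting $L (t_1, X)$ via the inductive hypothesis. My plan is to compute both fixpoints by Kleene iteration, expanding them as $\bigcup_{n\ge 0} L (t_1)^{\| n} \| K$ and $\bigcup_{n\ge 0} L (t_1)^{\| n}$ respectively, and then use the distributivity of shuffle over arbitrary unions to conclude. The only delicate point is that the continuity of $X \mapsto L(t_1) \| X$ (which justifies the Kleene-style enumeration of the fixpoint) needs a brief justification from the explicit inductive definition of $\|$ on words.
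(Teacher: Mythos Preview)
Your proposal is correct and follows essentially the same route as the paper. The paper's four-line proof of this lemma hinges on the identity $L(\concPart{r}, K) = L(\concPart{r}) \| K$, which is exactly your auxiliary claim (the paper records it separately as Lemma~\ref{le:split-concpart-async}); your induction on $t$ in the image of $\concPart{\cdot}$, including the Kleene-iteration argument for the star case, mirrors the paper's inductive proof of that lemma case by case.
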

\begin{proof}
  \begin{align*}
    L (\concPart{r}) \| L(\concPart{s})
    & = L (\concPart{r}, L(\concPart{s})) \\
    & = L (\concPart{r}, L(\concPart{s}, \{\varepsilon\})) \\
    & = L (\concPart{r \conc s}, \{\varepsilon\}) \\
    & = L (\concPart{r \conc s}) 
  \end{align*}
\qed
\end{proof}

\begin{lemma}
\label{le:split-concpart-async}
Let $r$ be a behavior and $K$ a language.
Then
  $L (\concPart r, K) = L (\concPart r) \| K$
\end{lemma}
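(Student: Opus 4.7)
\medskip
\noindent\textbf{Proof proposal.} The plan is to proceed by structural induction on $r$. The key intuition is that, by the definition in Figure~\ref{fig:seq-conc-part}, $\concPart{r}$ never contains a bare symbol atom: it is built from $\phi$, $\varepsilon$, and sub-behaviors of the form $\forkEff{s}$, combined by sum, concatenation, and Kleene star. Every letter that appears in a word of $L(\concPart{r})$ therefore comes from underneath a fork, which is exactly the operator whose semantics is defined by shuffling with the continuation. So the continuation language should freely interleave with $L(\concPart{r})$, which is precisely what the statement asserts.

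First I would dispatch the trivial cases: $\phi$ and $x$ (both give $L(\concPart{r}) = \emptyset$ and $L(\concPart{r}, K) = \emptyset$), $\varepsilon$ (both sides equal $K$), and $\forkEff{r_1}$ (both sides equal $L(r_1)\|K$ directly by definition). The sum case $\concPart{r_1+r_2} = \concPart{r_1}+\concPart{r_2}$ follows from the inductive hypothesis applied to each summand together with distributivity of $\|$ over $\cup$. For the concatenation case $\concPart{r_1\conc r_2} = \concPart{r_1}\conc\concPart{r_2}$, I would compute
\[
L(\concPart{r_1}\conc\concPart{r_2}, K) = L(\concPart{r_1}, L(\concPart{r_2}, K)),
\]
apply the IH twice (once to push $K$ out of the inner call, once to push $L(\concPart{r_2})\|K$ out of the outer call), and then reassemble the result with associativity of $\|$ and Lemma~\ref{lemma:concurrent-concatenation}.

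The main obstacle is the Kleene star case $\concPart{r_1^*} = \concPart{r_1}^*$, since both sides are defined via least fixpoints. Writing $A = L(\concPart{r_1})$, the IH lets me rewrite the left-hand side as $\mu X.\, A\|X \cup K$ and the right-hand side as $(\mu X.\, A\|X\cup\{\varepsilon\})\|K$. The technical task is to show these agree. I would argue this by iterated unfolding of each fixpoint, using monotonicity together with distributivity of $\|$ over $\cup$, and check that both sides equal $\bigcup_{n\ge 0} A^{\|n} \| K$, where $A^{\|0} = \{\varepsilon\}$ and $A^{\|(n+1)} = A \| A^{\|n}$. The only subtle point is that this identity relies on shuffle commuting with countable unions, which holds because $\|$ is defined pointwise on words. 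Once this is in place, the induction closes.
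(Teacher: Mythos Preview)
Your proposal is correct and matches the paper's proof essentially step for step: the paper also proceeds by structural induction on $r$, handles the base cases and sum/concatenation exactly as you describe (invoking Lemma~\ref{lemma:concurrent-concatenation} for concatenation), and in the Kleene star case passes through the shuffle closure $\bigcup_{n\ge 0} A^{\|n}$ of $A=L(\concPart{r_1})$ to identify the two fixpoints. Your treatment of the star case is in fact a bit more explicit than the paper's, which simply writes $\mu X.\,(A\|X\cup K)=A^\sharp\|K$ without spelling out the $\omega$-continuity justification that you supply.
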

\begin{proof}
  Induction on $f$.

  \textbf{Case }$\phi$.

  Trivial as $\concPart \phi = \phi$, $L (\phi, K) = \emptyset = \emptyset \| K$.

  \textbf{Case }$\varepsilon$.

  $\concPart \varepsilon  = \varepsilon$. $L (\varepsilon, K) = K = \{\varepsilon\} \| K$.

  \textbf{Case }$x$.

  $\concPart x = \phi$. $L (\phi, K) = \emptyset = \emptyset \| K$.

  \textbf{Case }$r+s$.
  \begin{align*}
    L (\concPart{r+s}, K) 
    & = L (\concPart{r}, K) \cup L (\concPart{s}, K) \\
    & = L (\concPart{r}) \| K \cup L (\concPart{s}) \| K \\
    & = L (\concPart{r+s}) \| K \\
  \end{align*}

  \textbf{Case }$r \conc s$. (requires Lemma~\ref{lemma:concurrent-concatenation})
  \begin{align*}
    L (\concPart{r \conc s}, K) 
    & = L (\concPart{r}\conc \concPart{s}, K) \\
    & = L (\concPart{r}, L(\concPart{s}, K)) \\
    & = L (\concPart{r}) \| L(\concPart{s}) \| K \\
    & = L (\concPart{r \conc s}) \| K \\
  \end{align*}

  \textbf{Case }$r^*$.
  \begin{align*}
    L (\concPart{r^*}, K)
    & = L (\concPart r^*, K) \\
    & = \mu X. (L (\concPart r, X) \cup K) \\
    & \{\text{IH}\} \\
    & = \mu X. (L (\concPart r) \| X \cup K) \\
    & =L (\concPart r)^\sharp \| K \\
    & = (\mu X. L (\concPart{r}) \| X \cup \{\varepsilon\}) \| K \\
    & \{\text{IH}\} \\
    & = (\mu X. L (\concPart{r}, X) \cup \{\varepsilon\}) \| K \\
    & = L (\concPart{r}^*, \{\varepsilon\}) \| K \\
    & = L (\concPart{r^*}) \| K \\
  \end{align*}

  \textbf{Case }$\forkEff r$.
  \begin{align*}
    L (\concPart{\forkEff r}, K)
    & = L ({\forkEff r}, K) \\
    & = L (r) \| K \\
    & = L (r) \| \{\varepsilon\} \| K \\
    & = L (\forkEff r, \{\varepsilon\}) \| K \\
    & = L (\concPart{\forkEff r}) \| K \\
  \end{align*}
\qed
\end{proof}

\subsection{Remaining cases of Theorem~\ref{th:deriv-correctness}}

\begin{proof}
  \textbf{Case }$\forkEff r$.
  \begin{align*}
    & L (\deriv{\forkEff r}{x}, K ) \cup L (\concPart{\forkEff r}) \| \leftQ x K \\
    & = L (\forkEff{\deriv{r}{x}}, K ) \cup L ({\forkEff r}) \| \leftQ x K \\
    & = L ({\deriv{r}{x}}) \| K  \cup L (r) \| \leftQ x K \\
    & \{\text{IH}\} \\
    & = \leftQ x {L (r)} \|  K \cup L (r) \| \leftQ x K\\
    & = \leftQ x {(L (r) \|  K)}\\
    & = \leftQ x {(L (\forkEff r, K))}
  \end{align*}

  \textbf{Case }$\phi$. $\concPart\phi = \phi$.

  $L (\deriv{\phi}{x}, K) = L (\phi, K) = \emptyset = \leftQ x {L (\phi , K)} $

  \textbf{Case }$\varepsilon$.  $\concPart \varepsilon = \varepsilon$.

  $L (\deriv{\varepsilon}{x}, K) \cup L (\concPart \varepsilon) \| (\leftQ x K) = L (\phi, K)  \cup \leftQ x K =  \leftQ x K = \leftQ x {L (\varepsilon, K)}$

  \textbf{Case }$x$. $\concPart x = \phi$.

  $L (\deriv{x}{x}, K) = L (\varepsilon, K) = K = \leftQ x {L (x, K)}$

  \textbf{Case }$y \ne x$.

  $L (\deriv{y}{x}, K) = L (\phi, K) = \emptyset = \leftQ x {L (y, K)}$

  \textbf{Case }$r+s$. $\concPart{r+s} = \concPart{r} + \concPart{s}$.
  \begin{align*}
    &L (\deriv{r+x}{x}, K) \cup L(\concPart{r+s}) \| (\leftQ x K) \\
    & =L (\deriv{r}{x}, K)  \cup L(\concPart{r}) \| (\leftQ x K) \cup L (\deriv{s}{x}, K) \cup
      L(\concPart{s}) \| (\leftQ x K) \\
    & \{\text{IH}\} \\
    & = \leftQ x {L (r, K)} \cup \leftQ x {L (s, K)}\\
    & = \leftQ x {L (r + s, K)}
  \end{align*}

  \textbf{Case }$r \conc s$. $\concPart{r \conc s} = \concPart{r} \conc \concPart{s}$.
  \begin{align*}
    & L (\deriv{r \conc s}{x}, K) \cup L(\concPart{r \conc s}) \| (\leftQ x K) \\
    & = L (\deriv{r}{x} \conc s, K) \cup L(\concPart{r} \conc \deriv{s}{x}, K) \cup L
      (\concPart{s}) \| L (\concPart{s}) \| (\leftQ x K) \\
    & = L (\deriv{r}{x}, L(s, K)) \cup L(\concPart{r}, L( \deriv{s}{x}, K)) \cup
      L(\concPart{r}) \| L (\concPart{s}) \| (\leftQ x K) \\
    & = L (\deriv{r}{x}, L(s, K)) \cup L(\concPart{r}) \| L( \deriv{s}{x}, K) \cup
      L(\concPart{r}) \| L (\concPart{s}) \| (\leftQ x K) \\
    & = L (\deriv{r}{x}, L(s, K)) \cup L(\concPart{r}) \| \big( L( \deriv{s}{x}, K) \cup
      L (\concPart{s}) \| (\leftQ x K)\big) \\
    & \{\text{IH}\} \\
    & = L (\deriv{r}{x}, L( s, K)) \cup L (\concPart{r})\| (\leftQ x {L( s, K)}) \\
    & \{\text{IH}\} \\
    & = \leftQ x {L (r, L( s, K))} \\
    & = \leftQ x {L (r \conc s, K)}
  \end{align*}  

 \textbf{Case }$r^*$.

We verify for all $n \leq 0$ that
\bda{c}
 \leftQ x L(r^n,K) = L(\deriv{r^n}{x},K) \cup L(\concPart{r^n}) \| (\leftQ x K)
\eda
where $r^0 = \varepsilon$ and $r^{n+1} = r \conc r^n$.

\textbf{Case $n=0$:} Straightforward

\textbf{Case $n \implies n+1$:}
\bda{ll}
     & L(\deriv{r^{n+1}}{x},K) \cup L(\concPart{r^{n+1}}) \| (\leftQ x K)
\\ = & L(\deriv{r}{x} \conc r^n + \concPart{r} \conc \deriv{r^n}{x},K) 
       \cup L(\concPart{r^{n+1}}) \| (\leftQ x K)
\\ = & L(\deriv{r}{x}, L(r^n,K)) \cup
       L(\concPart{r},L(\deriv{r^n}{x},K)) \cup 
       L(\concPart{r^{n+1}}) \| (\leftQ x K)
\\ = & L(\deriv{r}{x}, L(r^n,K)) \cup
       L(\concPart{r},L(\deriv{r^n}{x},K)) \cup 
       L(\concPart{r}, L(\concPart{r^n})) \| (\leftQ x K)
\\ & \mbox{(Lemma~\ref{le:split-concpart-async})}
\\ = & L(\deriv{r}{x}, L(r^n,K)) \cup
       L(\concPart{r}) \| L(\deriv{r^n}{x},K) \cup
       L(\concPart{r} \| L(\concPart{r^n}) \| (\leftQ x K)
\\ = & L(\deriv{r}{x}, L(r^n,K)) \cup
       L(\concPart{r}) \| (L(\deriv{r^n}{x},K) \cup
                          L(\concPart{r^n}) \| (\leftQ x K)) 
\\   & \mbox{(IH)}
\\ = & L(\deriv{r}{x}, L(r^n,K)) \cup
       L(\concPart{r}) \| (\leftQ x (L(r^n,K)))
\\   & \mbox{(IH)}
\\ = & \leftQ x L(r, L(r^n,K))
\\ = & \leftQ x L(r^{n+1},K)
\eda

Then, the actual statement
\bda{c}
 \leftQ l L(r^*,K) = L(\deriv{r^*}{x},K) \cup L(\concPart{r^*}) \| (\leftQ x K)
\eda
follows from the fact
that
\bda{l}
  (\leftQ x L(r^n,K)) \leq (\leftQ x L(b^*,K))
\\
\\ L(\deriv{r^n}{x},K) \cup L(\concPart{r^n}) \| (\leftQ x K) \leq 
   L(\deriv{r^*}{x},K) \cup L(\concPart{r^*}) \| (\leftQ x K)
\eda
and by showing that for all finite words $w \in \Sigma^*$
we have that $w \in (\leftQ x L(r^*,K))$ iff 
$w \in L(\deriv{r^*}{x},K) \cup L(\concPart{r^*}) \| (\leftQ x K)$.

If $w \in (\leftQ x L(r^*,K))$
then $w \in (\leftQ x L(b^n,K))$ for some $n\leq 0$.
From above, we conclude that
$w \in L(\deriv{r^n}{x},K) \cup L(\concPart{r^n}) \| (\leftQ x K) \leq 
   L(\deriv{r^*}{x},K) \cup L(\concPart{r^*}) \| (\leftQ x K)$.
The argument is similar for the other direction.
\qed
\end{proof}

\subsection{Proof of Lemma~\ref{le:eqsim-eps-phi}}

\begin{proof}
By induction on $r$.
Consider $L(r) = \{ \varepsilon \}$.

Cases $x$ and $\phi$ do not apply.
Case $\varepsilon$ is straightforward.

\textbf{Case $r+s$:}
$L(r + s) = \{ \varepsilon \}$ implies
that $L(r) \subseteq \{ \varepsilon \}$ and $L(s) \subseteq \{ \varepsilon \}$.
There are four subcases to consider.
Suppose $L(r) = \{ \varepsilon \}$ and $L(s) = \{ \}$.
By IH, $r \eqsim \varepsilon$ and either $s \eqsim \phi$.
Then, $r + s \eqsim \varepsilon$, by (Unit). Other cases are similar.

\textbf{Case $r \conc s$:} $L(r \conc s) = L(r) \conc L (s) = \{ \varepsilon \}$ implies
that $L(r) = \{ \varepsilon \}$ and $L(s) = \{ \varepsilon \}$.
By IH, $r \eqsim \varepsilon$ and $s \eqsim \varepsilon$.
Hence, $r \conc s \eqsim \varepsilon$, by (Empty Word).

\textbf{Case $r^*$:}
$L(r^*) = \{ \varepsilon \}$ implies that $L(r) = \{ \varepsilon \}$.
By IH, $r \eqsim \varepsilon$ from which we can conclude that
$r^* \eqsim \varepsilon$.

\textbf{Case $\forkEff{r}$:} Similar to the above.

Consider $L(r) = \{ \}$.
Cases $x$, $\varepsilon$ and $r^*$ do not apply. Case $\phi$ is straightforward.

\textbf{Case $r \conc s$:}
$L(r \conc s) = \{ \}$ implies that $L(r) = \{ \} \vee L(s) = \{ \}$.
Suppose $L(r) = \{ \}$. By IH, $r \eqsim \phi$ and therefore
$r \conc s \eqsim \phi$. The other case is similar.

\textbf{Case $r + s$:} Similar to the above.

\textbf{Case $\forkEff{r}$:}
$L(\forkEff{r}) = \{ \}$ implies $L(r) = \{ \}$.
Via similar arguments as above we find that $\forkEff{r} \eqsim \phi$.
\qed
\end{proof}

\subsection{Proof of Lemma \ref{le:c-d-c=d-c}}

\begin{proof}
  \textbf{Case }$r = \phi, \varepsilon, x$: trivial as  $\concPart{r} \syneq \phi$.

  \textbf{Case }$r+ s$: immediate from the IH.

  \textbf{Case }$r \conc s$: 
  \begin{align*}
    & \deriv{\concPart{r \conc r}}{x} \\
    &\syneq \deriv{\concPart{r}\conc    \concPart{x}}{x} \\
    &\syneq \deriv{\concPart{r}}{x} \conc \concPart{s} + \concPart{\concPart{r}} \conc
       \deriv{\concPart{x}}{x} \\
    & \qquad\{\text{IH, Lemma~\ref{le:basic-seq-conc-props} part 1}\} \\
    &\syneq \concPart{\deriv{\concPart{r}}{x}} \conc \concPart{\concPart{s}} +
      \concPart{\concPart{\concPart{r}}} \conc \concPart{\deriv{\concPart{s}}{x}} \\
    &\syneq \concPart{\deriv{\concPart{r \conc s}}{x}}
  \end{align*}

  \textbf{Case }$r^*$:
  \begin{align*}
    & \deriv{\concPart{r^*}}{l} \\
    & \syneq \deriv{\concPart{r}^*}{x} \\
    & \syneq \deriv{\concPart{r}}{x} \conc \concPart{r}^* \\
    & \qquad\{\text{IH, Lemma~\ref{le:basic-seq-conc-props} part 1}\} \\
    & \syneq \concPart{\deriv{\concPart{r}}{x}} \conc \concPart{\concPart{r}}^* \\
    & \syneq \concPart{\deriv{\concPart{r^*}}{x}}
  \end{align*}

  \textbf{Case }$\forkEff r$:
  \begin{align*}
    & \deriv{\concPart{\forkEff r}}{x} \\
    & \syneq \deriv{{\forkEff r}}{x} \\
    & \syneq \forkEff{\deriv{{ r}}{x}} \\
    & \syneq \concPart{\forkEff{\deriv{{ r}}{x}}} \\
    & \syneq \concPart{\deriv{\concPart{\forkEff r}}{x}} 
  \end{align*}
\qed
\end{proof}


\subsection{Proof of Theorem \ref{th:finiteness-well-behaved}}

\begin{proof}
 We need to generalize the statement to obtain the result:
 If $t$ is well-behaved then
 $\Card d_t^i < \infty$ for all $i \geq 0$
   where $d_t^0 = \deriv{t}{\eqsim}$ and $d_t^{n+1} = \deriv{\concPart{d_t^n}}{\eqsim}$.

 Based on Lemmas~\ref{le:deriv-finite-fp} and \ref{le:drop-conc-deriv-conc-deriv} 
we find that $d_t^{n+1} = d_t^n$ for $n \geq 1$.
That is, in the induction step it is sufficient to establish
that $d_t^0$ and $d_t^1$ are finite.
We proceed by induction on $t$.

\noindent \textbf{Cases $\varepsilon$, $x$, $\phi$:} Straightforward.

\noindent \textbf{Case $r+s$:} By the IH, $d_r^i$, $d_s^i$ are finite for any $i \geq 0$.
We know that $\deriv{r+s}{} = \deriv{r}{} + \deriv{s}{}$.
The above can be represented by
 $\deriv{r}{\eqsim} + \deriv{s}{\eqsim}$. Immediately, we find
 $d_{r+s}^0$ is finite.

Via similar reasoning, we can show
that the $\deriv{\concPart{\deriv{r+s}{}}}{}$
can be represented by 
$\deriv{\concPart{\deriv{r}{\eqsim}}}{\eqsim} + \deriv{\concPart{\deriv{s}{\eqsim}}}{\eqsim}$. Hence, $d_{r+s}^1$ which concludes the proof for this case.

\noindent \textbf{Case $\forkEff{r}:$} By the IH, $d_r^i$ are finite for any $i \geq 0$.

From $\deriv{\forkEff{r}}{} = \forkEff{\deriv{r}{}}$
and the above we can conclude that $d_{\forkEff{r}}^0$ is finite.

To establish $d_{\forkEff{r}}^1$ is finite, we apply the following reasoning.
\bda{ll}
  & \deriv{\concPart{\deriv{\forkEff{r}}{\eqsim}}}{\eqsim}
\\ \eqsim & \deriv{\concPart{\forkEff{\deriv{r}{\eqsim}}}}{\eqsim}
\\ \eqsim & \deriv{\forkEff{\deriv{r}{\eqsim}}}{\eqsim}
\\ \eqsim & \forkEff{\deriv{\deriv{r}{\eqsim}}{\eqsim}}
\\ \eqsim & \forkEff{\deriv{r}{\eqsim}}
\eda
The set $\deriv{r}{\eqsim}$ is finite.
Hence, $d_{\forkEff{r}}^1$ is finite.

\noindent \textbf{Case $r^*$:} By the IH, $d_r^i$ are finite for any $i \geq 0$.
We show that $d_{r^*}^0$ is finite.
 \begin{enumerate}
   \item By Lemma~\ref{le:derivatives-of-r*}, descendants of $r^*$ are of the form
   \bda{c}
      s_1 \conc ... \conc s_n \conc \deriv{r}{} \conc r^*
   \eda
   where $s_i \in \deriv{\concPart{\deriv{r}{w}}}{}$ for some $w\in \Sigma^+$.
 \item From Lemma~\ref{le:eqsim-eps-phi} and 
     the assumption that all subterms of the form $r^*$ have the
  property that $\concPart{\deriv{r}{w}} \le \varepsilon$, for $w \in \Sigma^*$,
    we conclude that the above is either similar to $\phi$ or $d_r^0 \conc r^*$.
  \item Hence, $d_{r^*}^0$ is finite.
 \end{enumerate}
  
 Next, we show that $d_{r^*}^1$ is finite.
  \begin{enumerate}
   \item We observe the possible forms of (dissimilar) descendants of $r^*$.
   \item For case $\phi$ we immediately find that $\deriv{\concPart{\phi}}{\eqsim}$ is finite.
   \item For case $d_r^0 \conc r^*$, we consider the possible descendants 
         of $\concPart{d_r^0 \conc r^*} = \concPart{d_r^0} \conc \concPart{r^*}$.
   \item By Lemma ~\ref{le:form-derivative-concatenation}, the shape of such terms
        is of the form
        \bda{c}
            \deriv{\concPart{d_r^0}}{} \conc \concPart{r^*} 
          + \concPart{d_r^0} \conc \deriv{\concPart{r^*}}{}
          + \sum \deriv{\concPart{\deriv{\concPart{d_r^0}}{}}}{} \conc \deriv{\concPart{r^*}}{}
        \eda
     \item By Lemma~\ref{le:deriv-rep-eqsim} the above is similar to
             \bda{c}
             d_r^1 \conc \concPart{r^*} 
          + \concPart{d_r^0} \conc \deriv{\concPart{r^*}}{}
          + \sum d_r^2 \conc \deriv{\concPart{r^*}}{}
        \eda
     \item We know that $d_r^i$ are finite. What remains is to show that
           $\deriv{\concPart{r^*}}{\eqsim}$ is finite.
     \item Via similar reasoning as above we can argue that the descendants of
           $\concPart{r^*} = \concPart{r}^*$ are of the form
       \bda{c}
         t_1 \conc ... \conc t_m \conc \deriv{\concPart{r}}{} \conc \concPart{r}^*
       \eda
        where $t_i \in \deriv{\concPart{\deriv{\concPart{r}}{w}}}{}$ for some $w \in \Sigma^+$.
 
     \item Each $t_i$ is either $\phi$ or $\varepsilon$ based on the following approximation.
          First, we find that 
         $\concPart{\deriv{\concPart{r}}{w}} \leq \concPart{\deriv{\concPart{r} + \seqPart{r}}{w}} = \concPart{\deriv{r}{w}}$. 
         As we know that $\concPart{\deriv{r}{w}} \leq \varepsilon$ for $w \in \Sigma^+$,
         we can conclude that $t_i \leq \varepsilon$.
     \item Hence, descendants of $\concPart{r}^*$ are either similar to $\phi$
           or terms of the form $\deriv{\concPart{r}}{} \conc \concPart{r}^*$.
     \item As we know $d_r^1$ is finite and subsumes $\deriv{\concPart{r}}{}$
           we can conclude that $\deriv{\concPart{r^*}}{\eqsim}$ is finite.
           Thus, we are done.
  \end{enumerate}

\noindent \textbf{Case $r \conc s$:} By the IH, $d_r^i$ and $d_s^i$ are finite for any $i \geq 0$.
 We show that $d_{r \conc s}^0$ is finite.
 \begin{enumerate}
    \item By Lemma~\ref{le:form-derivative-concatenation}, each derivative $\deriv{r \conc s}{w}$ has the form
       $$
        \deriv{r}{} \conc s + \concPart{r} \conc \deriv{s}{} + \sum \deriv{\concPart{\deriv{r}{}}}{} \conc \deriv{s}{}
       $$
    \item By Lemma~\ref{le:deriv-rep-eqsim}
          the above is similar to         
       $$
        \deriv{r}{\eqsim} \conc s + \concPart{r} \conc \deriv{s}{\eqsim} + \sum \deriv{\concPart{\deriv{r}{\eqsim}}}{\eqsim} \conc \deriv{s}{\eqsim}
       $$ 
    \item Immediately, we can conclude that $d_{r \conc s}^0$ is finite.
 \end{enumerate}
  Next, we consider $d_{r \conc s}^1$.
 \begin{enumerate}
   \item From above, the form of (dissimilar) descendants of $r \conc s$ is
       $$
        \underbrace{\deriv{r}{} \conc s}_{t_1} + 
        \underbrace{\concPart{r} \conc \deriv{s}{}}_{t_2} + 
       \sum \underbrace{\deriv{\concPart{\deriv{r}{}}}{} \conc \deriv{s}{\eqsim}}_{t_3}
       $$ 
      For each $t_i$ we will show that $d_{t_i}^1$ is finite and thus follows
      the desired result.
   \item By Lemma~\ref{le:form-derivative-concatenation}, descendants of $\concPart{t_1}$ are of the form
    $$
   \deriv{\concPart{\deriv{r}{}}}{} \conc \concPart{s}
   + \concPart{\deriv{r}{}} \conc \deriv{\concPart{s}}{}
   + \sum \deriv{\concPart{\deriv{\concPart{\deriv{r}{}}}{}}}{} \conc \deriv{\concPart{s}}{}
    $$
 \item As we know by the IH, $d_r^i$ and $d_s^i$ are finite for any $i \geq 0$.
       Hence, via similar reasoning as above we can conclude that the above
      is similar to expressions of the form
     \bda{c}
         d_r^1 \conc \concPart{s} + \concPart{d_r^0} \conc d_s^1
       + \sum d_r^2 \conc d_s^1
     \eda
      As all sub-components are finite,
       we can immediately conclude that $d_{t_1}^1$ is finite.
  \item We consider the descendants of $\concPart{t_2}$ which are of the following form
  \bda{c}
   \deriv{\concPart{r}}{} \conc \concPart{\deriv{s}{}}
   + \concPart{r} \conc \deriv{\concPart{\deriv{s}{}}}{}
   + \sum \deriv{\concPart{\deriv{\concPart{r}}{}}}{} \conc \deriv{\concPart{\deriv{s}{}}}{}
 \eda
   \item The above is similar to
   \bda{c}
       d_r^1 \conc \concPart{d_s^0}
     + \concPart{r} \conc d_r^1
     + \sum d_r^2 \conc d_s^1
   \eda
    Thus, we find that $d_{t_2}^1$ is finite.
   \item Finally, we observe that shape of descendants of $\concPart{t_3}$
    \bda{l}
        \deriv{\concPart{\deriv{\concPart{\deriv{r}{}}}{}}}{} \conc \concPart{\deriv{s}{}}
  + \concPart{\deriv{\concPart{\deriv{r}{}}}{}} \conc \deriv{\concPart{\deriv{s}{}}}{}
 \\  + \sum \deriv{\concPart{\deriv{\concPart{\deriv{\concPart{\deriv{r}{}}}{}}}{}}}{}
         \conc \deriv{\concPart{\deriv{s}{}}}{}
    \eda
    \item The above is similar to
     \bda{c}
          d_r^2 \conc \concPart{d_s^0}
       +  \concPart{d_r^1} \conc d_s^1
       + \sum d_r^3 \conc d_s^1
     \eda
     \item Then, $d_{t_3}^1$ is finite which concludes the proof for this case.
 \end{enumerate}
\qed
\end{proof}

\subsection{Similarity is decidable}

\begin{figure}[tp]
\bda{c}

 \rlabel{Comm} ~ \myirule{s < r}{r + s \rightarrow s + r}
\\ \\
 \rlabel{Assoc1} ~ \myirule{s < r}{r + (s + t) \rightarrow s + (r + t)}
\\ \\
 \rlabel{Assoc2} ~ (r + s) + t \rightarrow r + (s + t)
\\ \\
 \rlabel{Idemp} ~ r + r \rightarrow r
\\ \\
  \rlabel{U} ~ \phi + r \rightarrow r
\ba{cc}
 \ba{cc}
    \rlabel{EW} &
    \ba{l}
            \varepsilon \conc r \rightarrow r
      \\ \\ r \conc \varepsilon \rightarrow r
      \\ \\ \varepsilon^* \rightarrow \varepsilon
      \\ \\ \forkEff{\varepsilon} \rightarrow \varepsilon
    \ea
 \ea
  &
 \ba{cc}
    \rlabel{EL} &
    \ba{l}
           \phi \conc r \rightarrow \phi
     \\ \\ r \conc \phi \rightarrow \phi
     \\ \\ \phi^* \rightarrow \varepsilon
     \\ \\ \forkEff{\phi} \rightarrow \phi
    \ea
 \ea
\ea
\eda
  \caption{Normalization}
  \label{fig:term-normalization}
\end{figure}

We turn similarity rules from Figure~\ref{fig:similarity}
into term rewriting rules.
Roughly, the right-hand side of $\eqsim$ is replaced by
the left-hand side. See Figure \ref{fig:term-normalization}.
Rules \rlabel{EW} and \rlabel{EL} are straightforward.
Alternatives are normalized into right-associative normal form.
See rule \rlabel{Assoc2}. To easily enforce
idempotence (rule \rlabel{Idemp}) we sort
terms in a sum according to their term size.
See rules \rlabel{Comm} and \rlabel{Assoc1}.
We assume that $|r|$ denotes the size of $r$.
We write $r < s$ iff $|r| < |s|$.
Finally, rule \rlabel{U} removes $\phi$ in a sum.
We assume that the term size of behaviors is
such that $|\phi| \leq |r|$ for any behavior $r$.
Thus, the single rule \rlabel{U} is sufficient.

\begin{lemma}
The rewrite system in Figure \ref{fig:term-normalization} 
is terminating and confluent.
\end{lemma}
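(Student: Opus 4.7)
The claim has two parts that I would treat separately but with compatible techniques.

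For termination, my plan is to exhibit a well-founded order on behaviors that every rule strictly decreases. The rules \rlabel{Idemp}, \rlabel{U}, \rlabel{EW}, and \rlabel{EL} all strictly shrink the total symbol count $n(r)$. The remaining three rules \rlabel{Assoc2}, \rlabel{Comm}, and \rlabel{Assoc1} preserve $n(r)$, so I would refine the measure lexicographically to $(n(r), a(r), i(r))$, where $a(r)$ counts $+$-nodes whose left child is again a $+$-node and $i(r)$ counts inversions with respect to the size order $<$ when the sum is viewed as a flattened list of summands. Then \rlabel{Assoc2} strictly decreases $a(r)$ while preserving $n(r)$, and \rlabel{Comm} and \rlabel{Assoc1} strictly decrease $i(r)$ while not increasing $a(r)$ on sums already in right-associative form. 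Because rewrites may occur inside arbitrary contexts, I would either verify monotonicity under contexts directly or embed the measure into a recursive path order to obtain monotonicity for free.

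For confluence, I plan to invoke Newman's lemma: since the system is terminating, it suffices to prove local confluence, which follows from a systematic critical pair analysis. The overlaps fall into three groups: among the $+$-rules (for instance, $(r+s)+t$ with $t < r+s$ admits both \rlabel{Assoc2} and \rlabel{Comm}; and $r + (s+t)$ admits both \rlabel{Assoc1} and possibly \rlabel{Comm} depending on sizes); among the $\conc$-rules (for instance $\phi \conc \varepsilon$ triggers both \rlabel{EW} and \rlabel{EL}, both yielding $\phi$); and between the two groups at the boundary where a sum sits inside a concatenation or a $\forkEff{\cdot}$ and vice versa. In each case the divergent reducts can be joined by a short sequence of further rewrites, whose existence is guaranteed by the termination measure.

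The main anticipated obstacle is the interaction of \rlabel{Comm} with \rlabel{Assoc1} and \rlabel{Assoc2}. A \rlabel{Comm} step applied high in a term moves an entire subtree and, in a naive flattened-list reading, may appear to worsen the inversion count; only subsequent \rlabel{Assoc2} and \rlabel{Assoc1} steps bring the count back down. Two clean ways around this are (a) to measure inversions only after full \rlabel{Assoc2}-normalization, treating a \rlabel{Comm} at a non-right-associative position as implicitly performing an \rlabel{Assoc2} step, or (b) to adopt a path order that handles associativity and commutativity in a unified way. A related subtlety for confluence is that the relation $r < s$ is defined only via $|r| < |s|$ and is thus a partial order; to obtain strict confluence rather than confluence modulo the permutation of equal-sized summands, one must fix a total tie-breaker, for instance a syntactic order on terms of the same size.
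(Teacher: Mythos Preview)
Your plan coincides with the paper's: termination by a size measure refined with a secondary ordering for the reordering rules \rlabel{Comm}, \rlabel{Assoc1}, \rlabel{Assoc2}, then confluence via Newman's lemma and a critical-pair check. The paper's argument is more informal---it simply observes that \rlabel{Comm} and \rlabel{Assoc1} ``shift smaller terms to the left'' and that \rlabel{Assoc2} shrinks the left summand---so your explicit lexicographic measure $(n,a,i)$ is a sharpening in the same spirit.

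Your closing remark, however, is not a side issue but a genuine gap, and the paper does not address it. With $r<s$ defined as $|r|<|s|$ (as the paper stipulates for Figure~\ref{fig:term-normalization}), the system is \emph{not} confluent. Take distinct letters $x,y$ of equal size and consider $(x+y)+y$. One reduction is
\[
(x+y)+y \ \to_{\rlabel{Assoc2}}\ x+(y+y) \ \to_{\rlabel{Idemp}}\ x+y,
\]
a normal form. Another is
\[
(x+y)+y \ \to_{\rlabel{Comm}}\ y+(x+y)\qquad\text{since }|y|<|x+y|,
\]
and $y+(x+y)$ is already normal: \rlabel{Assoc1} would require $|x|<|y|$, top-level \rlabel{Comm} would require $|x+y|<|y|$, inner \rlabel{Comm} would require $|y|<|x|$, and no other rule matches. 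Hence the very critical pair between \rlabel{Comm} and \rlabel{Assoc2} that the paper singles out as ``joinable'' fails to join in this instance. Your proposed repair---refine $<$ to a total syntactic order as a tiebreaker---is exactly what is needed; the paper quietly assumes a total order in its separate smart-constructor section, but the rewrite system as stated is not confluent, and the paper's critical-pair sketch overlooks this.
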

\begin{proof}
Termination is due to the fact
that the he size of the left-hand side term becomes
smaller than the size of the right-hand side term.
The exceptions are rules \rlabel{Comm} and \rlabel{Assoc1-2}.
However, it is clear that there can only be a finite
number of applications of these terms. In case 
of \rlabel{Comm} and \rlabel{Assoc1} we shift smaller
terms to the 'left'. In case of \rlabel{Assoc2},
the left part of an alternative becomes smaller.

We show confluence by establishing local confluence.
That is, all critical pairs are confluence.
Given that the term rewrite system is terminating,
we obtain confluence by application of Newman's Lemma.

It is a straightforward exercise to verify that
all critical pairs are joinable.
For example, note the presence of rule \rlabel{Assoc2}.
Thus, the critical pair among \rlabel{Comm} and \rlabel{Assoc2}
becomes joinable.
\qed
\end{proof}

Based on the above result we can build canonical normal forms
by exhaustive rule application until we reach a fixpoint.
We write $r \rightarrow^* t$ to denote that $t$ is term
on which no further rewrite rules are applicable.
We refer to $t$ as  the canonical form of $r$.
It is clear that $r \eqsim t$.

\begin{lemma}
Let $r$ and $s$ be behaviors such that $r \eqsim s$.
Then, we find $r \rightarrowtail^* t_1$ and $s \rightarrowtail^* t_2$
for some $t_1$ and $t_2$ where $t_1$ and $t_2$ are syntactically equivalent.
\end{lemma}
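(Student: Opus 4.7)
The plan is to prove the stronger statement that $r \eqsim s$ if and only if $r$ and $s$ have the same canonical form, and then read off the conclusion with $t_1 = t_2 = \mathit{nf}(r) = \mathit{nf}(s)$. By the preceding lemma the rewrite system is terminating and confluent, so every behavior $u$ has a unique canonical form $\mathit{nf}(u)$, reached by exhaustive application of the rules in Figure~\ref{fig:term-normalization}. Soundness of rewriting with respect to $\eqsim$ is immediate, since every rewrite rule is the right-to-left orientation of a corresponding axiom in Figure~\ref{fig:similarity} (and closure under regular contexts is built into the rewrite relation).

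The main work is the completeness direction: whenever $r \eqsim s$ is derivable, $\mathit{nf}(r) = \mathit{nf}(s)$ holds syntactically. I would proceed by induction on the derivation of $r \eqsim s$. The structural rules \rlabel{Refl}, \rlabel{Sym}, and \rlabel{Trans} follow immediately from reflexivity, symmetry, and transitivity of syntactic equality applied to the inductive hypotheses. For each axiomatic rule, I would exhibit explicit rewrite sequences that bring both sides to the same normal form: \rlabel{Idem}, \rlabel{Unit}, \rlabel{Empty Word}, and \rlabel{Empty Language} each correspond to a single rewrite step on the more complex side; the associativity axiom is oriented by \rlabel{Assoc2} to right-associated form; and commutativity is oriented by \rlabel{Comm} so that both $r + s$ and $s + r$ are sorted into the same canonical order (assuming the size ordering is refined to a strict total order for tie-breaking among terms of equal size, combining with \rlabel{Idem} to collapse duplicates).

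The hard part will be the compatibility rule \rlabel{Comp}: from the inductive hypothesis $\mathit{nf}(s) = \mathit{nf}(t)$ one must conclude $\mathit{nf}(E[s]) = \mathit{nf}(E[t])$ for every regular context $E$. The plan is to exploit the fact that $\rightarrowtail$ is closed under contexts: from $s \rightarrowtail^* \mathit{nf}(s)$ we get $E[s] \rightarrowtail^* E[\mathit{nf}(s)]$, and symmetrically $E[t] \rightarrowtail^* E[\mathit{nf}(t)] = E[\mathit{nf}(s)]$. Thus $E[s]$ and $E[t]$ share a common reduct; by confluence and uniqueness of normal forms (the preceding lemma), $\mathit{nf}(E[s]) = \mathit{nf}(E[t])$ follows. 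With completeness in hand, the statement follows because $r \eqsim s$ entails $\mathit{nf}(r) = \mathit{nf}(s)$ and both $r \rightarrowtail^* \mathit{nf}(r)$ and $s \rightarrowtail^* \mathit{nf}(s)$ by definition of the canonical form.
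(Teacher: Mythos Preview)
Your proposal is correct and follows essentially the same approach as the paper: induction on the derivation of $r \eqsim s$, with the \rlabel{Comp} case handled by lifting the inner rewrite sequences through the context to obtain a common reduct and then appealing to confluence. Your framing as an if-and-only-if (with an explicit soundness direction) and your explicit treatment of \rlabel{Refl}, \rlabel{Sym}, \rlabel{Trans} are minor elaborations, but the core argument is identical to the paper's.

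One point worth highlighting: your parenthetical remark that the size ordering must be refined to a strict total order is not a cosmetic detail but a genuine gap you have spotted in the paper's setup. As stated in Figure~\ref{fig:term-normalization}, the ordering $r < s$ means $|r| < |s|$, so for distinct symbols $x,y$ of equal size neither \rlabel{Comm} nor \rlabel{Idemp} fires on $x+y$ or $y+x$; both are normal forms yet $x+y \eqsim y+x$. The paper's proof sketch does not address this, whereas your assumption makes the argument go through. So your proposal is not only aligned with the paper's approach but slightly more careful on this point.
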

\begin{proof}
The result follows by induction over the derivation $r \eqsim s$.
For brevity, we only show some cases.

Suppose
\bda{c}
\rlabel{Compatibility} ~ \myirule{s_1 \eqsim s_2}{r[s_1] \eqsim r[s_2]}
\eda
By the IH, we find $s_1 \rightarrowtail^* t$, $s_2 \rightarrowtail^* t$
for some $t$.
Hence, $r[s_1] \rightarrowtail^n r[t]$
and $r[s_2] \rightarrowtail^m r[t]$.
That is, in finite number of rewrite steps (either $n$ or $m$),
we rewrite $r[s_1]$ into  $r[t]$
 and $r[s_2]$ into $r[t]$.
As every term, e.g.~$r[t]$, must have a canonical normal form,
the result follows by combing the above rewrite steps
and from the fact that the rewrite system is confluent.

Suppose in the last derivation step, we apply
\bda{c}
\rlabel{Associativity} ~ r_1 + (r_2 + r_3) \eqsim (r_1 + r_2) + r_3
\eda
Suppose $r_1 + (r_2 + r_3) \rightarrowtail^* t$.
Then, $(r_1 + r_2) + r_3 \rightarrowtail^{Assoc2} r_1 + (r_2 + r_3)$
and by confluence we know that 
$r_1 + (r_2 + r_3)$ and  $(r_1 + r_2) + r_3$ share the same canonical 
normal form.
\qed
\end{proof}

Based on the above results we 
 obtain a decidable method to test for similarity
for some behaviors $r$ and $s$.
We build the canonical normal forms of $r$ and $s$.
If the canonical normal forms are syntactically equivalent,
we find that that $r$ and $s$ are similar.
Otherwise, $r$ and $s$ are dissimilar.

\subsection{Normal form}
\label{sec:normalform}

Define smart constructors for expressions that implement the similarity rules. We assume a total
ordering $<$ on forkable expressions. We further assume that arguments of constructors are already
in normal form. That means, that the monomials in every sum are sorted in ascending order and that
monomials are bracketed to the right.
\begin{displaymath}
  r \oplus s = 
  \begin{cases}
    r & r = s \vee s=\phi \\
    s & r=\phi \\
    r' \oplus (r'' \oplus s) & r=r'+r'' \\
    (r \oplus s') + s'' & r \ne r'+r'', s=s'+s'', r<\min s'' \\
    s + (r \oplus s'') & r \ne r'+r'', s=s'+s'', r\ge \min s'' \\
    r+s& r\ne r'+r'', s\ne s'+s'', r<s \\
    s+r& r\ne r'+r'', s\ne s'+s'', r>s \\
  \end{cases}
\end{displaymath}
\begin{displaymath}
  r \odot s =
  \begin{cases}
    \phi & r=\phi \vee s=\phi \\
    r & s=\varepsilon \\
    s & r = \varepsilon \\
    r' \odot (r'' \odot s) & r = r'\cdot r'' \\
    s \cdot r & r = \forkEff{r'}, s= \forkEff{s'}, s'<r' \\
    \forkEff{s'} \cdot (r \odot s'') & r = \forkEff{r'}, s= \forkEff{s'} \cdot s'', s'<r' \\
    r \cdot s & r \ne r'\cdot r''
  \end{cases}
\end{displaymath}
\begin{displaymath}
  r^\circledast =
  \begin{cases}
    \varepsilon & r = \phi \vee r = \varepsilon \\
    r^* & \text{otherwise}
  \end{cases}
\end{displaymath}
\begin{displaymath}
  F (r) =
  \begin{cases}
    \phi & r=\phi \\
    \varepsilon & r = \varepsilon \\
    \forkEff{r} & \text{otherwise}
  \end{cases}
\end{displaymath}
This normal form exploits the following lemma, which is straightforward to prove from the semantics.
\begin{lemma}
  $\forkEff r \conc \forkEff s \semeq \forkEff s \conc \forkEff r$, for all $r$ and $s$.
\end{lemma}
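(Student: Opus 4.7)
The plan is to reduce the claimed semantic commutativity to the well-known commutativity and associativity of the shuffle operation on languages. Recall that $r \semeq s$ means $L(r, K) = L(s, K)$ for every continuation language $K \subseteq \Sigma^*$, so I would fix an arbitrary $K$ and compute both sides using the defining equations from Figure~\ref{fig:trace-language}.

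First I would unfold the left-hand side: by the concatenation clause,
\begin{align*}
  L(\forkEff r \conc \forkEff s, K)
  &= L(\forkEff r, L(\forkEff s, K))
  = L(r) \| L(\forkEff s, K)
  = L(r) \| \bigl(L(s) \| K\bigr),
\end{align*}
where I use the fork clause $L(\forkEff t, M) = L(t) \| M$ twice. Analogously, the right-hand side simplifies to $L(s) \| \bigl(L(r) \| K\bigr)$. So the lemma reduces to the purely language-theoretic identity
\begin{align*}
  L(r) \| \bigl(L(s) \| K\bigr) \;=\; L(s) \| \bigl(L(r) \| K\bigr).
\end{align*}

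Next I would establish (or cite as folklore) commutativity $A \| B = B \| A$ and associativity $A \| (B \| C) = (A \| B) \| C$ of shuffle on languages. Both properties follow by a routine induction on word length starting from the inductive definition of $v \| w$ given in Section~\ref{sec:preliminaries}: commutativity from the symmetry of the recursion $xv \| yw = \{x\}\conc(v\|yw) \cup \{y\}\conc(xv\|w)$, and associativity by case-splitting on the first letter of each of the three words. Granting these, I can rewrite
\begin{align*}
  L(r) \| \bigl(L(s) \| K\bigr)
  \;=\; \bigl(L(r) \| L(s)\bigr) \| K
  \;=\; \bigl(L(s) \| L(r)\bigr) \| K
  \;=\; L(s) \| \bigl(L(r) \| K\bigr),
\end{align*}
and the lemma follows since $K$ was arbitrary.

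The only nontrivial step is the associativity of $\|$ on languages: commutativity is immediate from the symmetric definition, and the rest of the argument is bookkeeping. If the paper prefers to avoid invoking associativity as a black box, I would instead prove the identity directly by observing that an interleaving of three words can be described symmetrically by counting, for each position of the output, which of the three input streams contributed the letter, so that shuffling $L(r)$, $L(s)$, and $K$ in any order yields the same set of interleavings; this is the standard ``three-way shuffle'' argument. Either route gives the lemma without any induction on the expressions $r$ and $s$ themselves.
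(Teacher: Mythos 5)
Your proof is correct and matches the paper's intent: the paper gives no explicit proof, merely remarking that the lemma is ``straightforward to prove from the semantics,'' and your unfolding of $L(\forkEff r \conc \forkEff s, K)$ to $L(r) \| (L(s) \| K)$ via the concatenation and fork clauses, followed by commutativity and associativity of the shuffle on languages, is exactly that argument. The only point worth making explicit in a written-out version is the associativity of $\|$ on languages, which you correctly identify and for which the three-way interleaving characterization suffices.
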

\section{Nullable and Emptiness Test}

\begin{definition}[Nullable Test]
\begin{align*}
  \nullable{\phi}& = \false
  &
 \nullable{\varepsilon} & = \true
  &
 \nullable{x} &= \true
\\
\nullable{r + s} &= \nullable{r} \vee \nullable{s}
 &
\nullable{r \conc s}& = \nullable{r} \wedge \nullable{s}
 &
\nullable{r^*} &= \true
\\
\nullable{\forkEff{r}} &= \nullable{r}
\end{align*}
\end{definition}

\begin{lemma}
Let $r$ be a behavior.
Then, $\varepsilon \in L(r)$ iff $\nullable{r}$.
\end{lemma}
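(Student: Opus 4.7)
The plan is to prove the lemma by straightforward structural induction on $r$, following the inductive structure of the definition of $\nullable{\cdot}$ and leveraging Lemma~\ref{le:concpart-nullable} where it simplifies the reasoning.

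For the base cases I would compute both sides directly: $L(\phi) = \emptyset$ matches $\nullable{\phi} = \false$, and $L(\varepsilon) = \{\varepsilon\}$ matches $\nullable{\varepsilon} = \true$. (The clause $\nullable{x} = \true$ in the definition appears to be a typo: since $L(x) = \{x\}$ does not contain $\varepsilon$, it should read $\nullable{x} = \false$; with this correction the base case for primitive events goes through.)

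For the inductive step, the $r+s$ case follows immediately from $L(r+s) = L(r) \cup L(s)$ together with the inductive hypothesis. For $r \conc s$ I would route the argument through Lemma~\ref{le:concpart-nullable}: $\varepsilon \in L(r \conc s)$ iff $\varepsilon \semle \concPart{r \conc s} = \concPart{r} \conc \concPart{s}$, which in turn holds iff $\varepsilon \semle \concPart{r}$ and $\varepsilon \semle \concPart{s}$; two further applications of Lemma~\ref{le:concpart-nullable} and the inductive hypothesis yield $\nullable{r} \wedge \nullable{s} = \nullable{r \conc s}$. For $r^*$, both sides are trivially true: $\varepsilon$ is always in the least fixpoint defining $L(r^*, \{\varepsilon\})$, and $\nullable{r^*} = \true$ by definition. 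For $\forkEff{r}$, the key observation is that $L(\forkEff{r}) = L(r) \| \{\varepsilon\} = L(r)$, since $v \| \varepsilon = \{v\}$, so $\varepsilon \in L(\forkEff{r}) \iff \varepsilon \in L(r)$ and the inductive hypothesis closes the case.

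There is no real obstacle: the proof is routine once one notices that in the $\forkEff{r}$ case the shuffle with the empty-word continuation collapses to $L(r)$, and that the apparent typo in the base case for a primitive event must be corrected for the statement to hold.
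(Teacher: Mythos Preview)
The paper does not give an explicit proof of this lemma; it is stated without proof, so there is nothing to compare against directly. Your structural induction is the natural approach and is essentially correct, and you are right that the clause $\nullable{x} = \true$ must be a typo for $\false$.

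One remark on the concatenation case: routing through Lemma~\ref{le:concpart-nullable} works, but the step ``$\varepsilon \semle \concPart{r}\conc\concPart{s}$ iff $\varepsilon \semle \concPart{r}$ and $\varepsilon \semle \concPart{s}$'' is not quite immediate; to justify the left-to-right direction you implicitly need Lemma~\ref{le:split-concpart-async} (to reduce to $\varepsilon \in L(\concPart r)\|L(\concPart s)$) plus the observation that $\varepsilon$ lies in a shuffle only if it lies in both operands. A more direct route, in the spirit of the paper's proof of the companion emptiness lemma, is to strengthen the induction hypothesis to ``for all $K$, $\varepsilon \in L(r,K)$ iff $\nullable{r}$ and $\varepsilon \in K$''. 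Then the concatenation case is a two-line unfolding of $L(r\conc s,K) = L(r, L(s,K))$, and the $r^*$ case uses the fact $L(r,\emptyset)=\emptyset$ to see that $\varepsilon$ enters the fixpoint only through $K$.
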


\begin{definition}[Emptiness Test]
\label{def:emptiness-test}
\begin{align*}
 \isEmpty{\phi}& = \true
  &
 \isEmpty{\varepsilon}& = \false
  &
 \isEmpty{x}& = \false
\\
\isEmpty{r + s}& = \isEmpty{r} \wedge \isEmpty{s}
 &
\isEmpty{r \conc s}& = \isEmpty{r} \vee \isEmpty{s}
 &
\isEmpty{r^*} & = \false
\\
\isEmpty{\forkEff{r}} & = \isEmpty{r}
\end{align*}
\end{definition}

\begin{lemma}
Let $r$ be a behavior.
Then, $L(r) = \{ \}$ iff $\isEmpty{r}$.
\end{lemma}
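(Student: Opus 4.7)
The natural approach is structural induction on $r$, but the statement as given (specialized to the continuation $\{\varepsilon\}$) does not strengthen well under the induction hypothesis; in particular, in the concatenation case $L(r \conc s) = L(r, L(s, \{\varepsilon\}))$ the inner continuation is not $\{\varepsilon\}$. The plan is therefore to prove the generalized claim
\begin{equation*}
L(r, K) = \emptyset \quad \text{iff} \quad \isEmpty{r} \ \vee\ K = \emptyset,
\end{equation*}
for every continuation language $K \subseteq \Sigma^*$, from which the lemma follows by setting $K = \{\varepsilon\}$ (noting $\{\varepsilon\} \neq \emptyset$).

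The induction on $r$ is routine in most cases. The atomic cases $\phi$, $\varepsilon$, $x$ follow directly from the defining clauses of $L(\cdot,\cdot)$ in Figure~\ref{fig:trace-language}. For $r+s$, the union $L(r,K) \cup L(s,K)$ is empty iff both summands are, and distributing the ``$K = \emptyset$'' disjunct through the conjunction matches $\isEmpty{r+s} = \isEmpty{r} \wedge \isEmpty{s}$. For $r \conc s$, I would first apply the IH for $r$ to $L(r, L(s,K))$, then the IH for $s$ to $L(s,K)$, producing the disjunction $\isEmpty{r} \vee \isEmpty{s} \vee K=\emptyset$ which matches $\isEmpty{r \conc s} = \isEmpty{r} \vee \isEmpty{s}$. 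For $\forkEff{r}$, I would use $L(\forkEff{r}, K) = L(r) \| K$: since shuffling with $\emptyset$ is $\emptyset$, emptiness reduces to $L(r) = \emptyset$ or $K = \emptyset$, and the former is $\isEmpty{r} = \isEmpty{\forkEff{r}}$ by the IH applied with continuation $\{\varepsilon\}$.

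The one case that needs more care is $r^*$, where $\isEmpty{r^*} = \false$, so the generalized claim specializes to $L(r^*, K) = \emptyset$ iff $K = \emptyset$. For the nontrivial direction, if $K \neq \emptyset$, inspecting the fixpoint equation $L(r^*, K) = \mu X.\, L(r, X) \cup K$ shows immediately that $K \subseteq L(r^*, K)$, hence $L(r^*, K) \neq \emptyset$. Conversely, if $K = \emptyset$, the IH for $r$ (applied with continuation $\emptyset$) gives $L(r, \emptyset) = \emptyset$; then the Kleene chain $\emptyset \subseteq L(r, \emptyset) \subseteq L(r, L(r, \emptyset)) \subseteq \ldots$ stays at $\emptyset$, so the least fixpoint is $\emptyset$.

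The main obstacle is precisely the $r^*$ case: one must resist the temptation to treat $L(r^*, K)$ as if it were $\bigcup_n L(r^n, K)$ (which holds cleanly only in the fork-free setting of Theorem~\ref{theorem:fork-free-regular}) and instead argue directly from the monotone fixpoint definition. Once the generalized statement is in hand, recovering $L(r) = L(r,\{\varepsilon\}) = \emptyset \iff \isEmpty{r}$ is immediate.
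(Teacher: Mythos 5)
Your proposal is correct and takes essentially the same route as the paper: the paper likewise generalizes over the continuation (noting the fact $L(r,\emptyset)=\emptyset$ and then verifying $L(r,K)\ne\emptyset$ iff $\neg\isEmpty{r}$ for nonempty $K$ by ``straightforward induction''), which is exactly your disjunction $\isEmpty{r}\vee K=\emptyset$ stated slightly more carefully. Your treatment of the $r^*$ case via the least fixpoint (and of $\forkEff{r}$ via shuffle) just fills in details the paper leaves implicit.
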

\begin{proof}
Fact: $L(r, \{ \}) = \{ \}$.
We verify $L(r,K) \not = \{ \}$ iff $\neg \isEmpty{r}$.
The proof proceeds by straightforward induction.
\qed
\end{proof}

\end{document}